\titleformat{\section}{\large\bfseries}{\thesection}{1em}{}
\definecolor{cqcqcq}{rgb}{0.6,0.6,0.6}
\newtheorem{thm}{Theorem}[section]
\newtheorem{prop}[thm]{Proposition}
\newtheorem{lemma}[thm]{Lemma}
\newtheorem{coro}[thm]{Corollary}
\newtheorem{definition}[thm]{Definition}
\theoremstyle{definition}
\newtheorem{remark}[thm]{Remark}
\begin{document}

\begin{center}
{\LARGE{Coble's group and the integrability of the Gosset-Elte polytopes and tessellations}}
\vskip6mm

\noindent
{\large James Atkinson\footnote{contact: james.l.atkinson@gmail.com} } \\

\noindent
{\large 08.03.17}\\

\end{center}
\noindent
{\bf Abstract.}
This paper considers the planar figure of a combinatorial polytope or tessellation identified by the Coxeter symbol $k_{i,j}$, inscribed in a conic, satisfying the geometric constraint that each octahedral cell has a centre.
This realisation exists, and is movable, on account of some constraints being satisfied as a consequence of the others. 
A close connection to the birational group found originally by Coble in the different context of invariants for sets of points in projective space, allows to specify precisely a determining subset of vertices that may be freely chosen.
This gives a unified geometric view of certain integrable discrete systems in one, two and three dimensions.
Making contact with previous geometric accounts in the case of three dimensions, it is shown how the figure also manifests as a configuration of circles generalising the Clifford lattices, and how it can be applied to construct the spatial point-line configurations called the Desargues maps.
\vskip10mm

\section{Introduction}
The six-point multi-ratio equation appears in the theory of integrable systems 
on a discrete domain with octahedral cells \cite{DN,Bog,ks1,slg,ks,DDM,ABSo,SKP}.
It has the following geometric meaning, which has not been considered previously in this area.
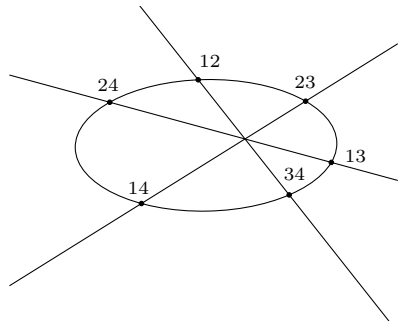
\begin{figure}[b!]
\begin{center}
\begin{tikzpicture}[line cap=round,line join=round,>=triangle 45,x=1.0cm,y=1.0cm,scale=0.7]
\clip(-2.198675260816108,-0.3090233135157237) rectangle (5.086795530131022,5.632403728067066);
\draw [rotate around={-178.84645628602107:(1.4894157407604034,3.003671319090707)}] (1.4894157407604034,3.003671319090707) ellipse (2.4579728352218475cm and 1.2461351422962144cm);
\draw [domain=-2.198675260816108:5.086795530131022] plot(\x,{(--5.330496500877649--1.9382597631593363*\x)/3.084127423201307});
\draw [domain=-2.198675260816108:5.086795530131022] plot(\x,{(--15.538950967737803-1.1394680579201548*\x)/4.163241033055564});
\draw [domain=-2.198675260816108:5.086795530131022] plot(\x,{(-4.948149101844914--1.059392261900026*\x)/-0.8304954459702931});
\begin{scriptsize}
\draw [fill=black] (0.2758725767986928,1.9017402368406635) circle (1.2pt);
\draw (0.22,2.19) node {${14}$};
\draw [fill=black] (-0.32,3.82) circle (1.2pt);
\draw (-0.35,4.15) node {${24}$};
\draw [fill=black] (3.36,3.84) circle (1.2pt);
\draw (3.35,4.2) node {${23}$};
\draw [fill=black] (3.8432410330555635,2.680531942079845) circle (1.2pt);
\draw (4.3,2.8) node {${13}$};
\draw [fill=black] (3.0518881712802717,2.0650352718101694) circle (1.2pt);
\draw (3.15,2.46) node {${34}$};
\draw [fill=black] (1.3424913796183775,4.2455670765181255) circle (1.2pt);
\draw (1.56,4.6) node {${12}$};
\end{scriptsize}
\end{tikzpicture}
\end{center}
\caption{Concurrent secants of a conic.}
\label{skp-illustration2}
\end{figure}
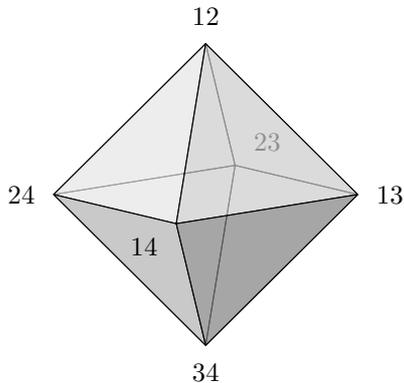
\begin{figure}[t]
\begin{center}
\begin{tikzpicture}[line join=bevel,z=-5.5,scale=2.0]
\coordinate (Y3) at (0,0,-1);
\coordinate (Y2) at (-1,0,0);
\coordinate (X3) at (0,0,1);
\coordinate (X2) at (1,0,0);
\coordinate (X1) at (0,1,0);
\coordinate (Y1) at (0,-1,0);

\node at (X1) [label={[label distance=0.0cm]above:${12}$}]{};
\node at (X2) [label={[label distance=0.0cm]right:${13}$}]{};
\node at (Y1) [label={[label distance=0.0cm]below:${34}$}]{};
\node at (Y2) [label={[label distance=0.0cm]left:${24}$}]{};
\node at (Y3) [label={[label distance=0.0cm]30:${23}$}]{};


\draw (Y3) -- (Y2) -- (X1) -- cycle;
\draw (X2) -- (Y3) -- (X1) -- cycle;
\draw (Y3) -- (Y2) -- (Y1) -- cycle;
\draw (X2) -- (Y3) -- (Y1) -- cycle;

\draw [fill opacity=0.7,fill=white!90!black] (Y2) -- (X3) -- (X1) -- cycle;
\draw [fill opacity=0.7,fill=white!80!black] (X3) -- (X2) -- (X1) -- cycle;
\draw [fill opacity=0.7,fill=white!70!black] (Y2) -- (X3) -- (Y1) -- cycle;
\draw [fill opacity=0.7,fill=white!50!black] (X3) -- (X2) -- (Y1) -- cycle;

\node at (X3) [label={[label distance=0.0cm]210:${14}$}]{};

\end{tikzpicture}
\end{center}
\caption{Octahedron with labelled vertices.}
\label{octahedron}
\end{figure}
\begin{lemma}\label{gg}
If three lines meet a rationally parameterised conic at points corresponding to the three pairs of parameter values,
\begin{equation}\label{stencil}
 \{x_{12},x_{34}\},  \{x_{13},x_{24}\}, \{x_{14},x_{23}\}, 
\end{equation}
then the concurrence of the lines, Figure \ref{skp-illustration2}, is expressed analytically as 
\begin{equation}
\frac{(x_{12}-x_{24})(x_{13}-x_{23})(x_{14}-x_{34})}{(x_{12}-x_{23})(x_{14}-x_{24})(x_{13}-x_{34})}=1.\label{skp}
\end{equation}
\end{lemma}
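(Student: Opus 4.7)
The plan is to fix a convenient rational parameterisation of the conic, express the concurrence of the three secants as the vanishing of a $3\times 3$ determinant in the six parameters, and then verify a polynomial identity that identifies this determinant with the numerator of $1$ minus the multi-ratio in~\eqref{skp}.

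First I would note that the left-hand side of \eqref{skp} is invariant under M\"obius reparameterisations $x\mapsto (\alpha x+\beta)/(\gamma x+\delta)$ applied uniformly to all six $x_{ij}$: each $x_{ij}$ appears once in the numerator and once in the denominator, so the factors $(\gamma x+\delta)^{-1}$ produced from each difference cancel in pairs, as do the Jacobian factors $(\alpha\delta-\beta\gamma)$. Because any two rational parameterisations of a conic differ by a projective transformation of the parameter line, this lets me assume without loss of generality that the conic is given in its standard Veronese form $t\mapsto (1:t:t^2)\in\mathbb{P}^2$. In this model the secant through the two points with parameters $a$ and $b$ is immediately computed (from the $3\times 3$ point-line determinant, after factoring $b-a$) as $X_2-(a+b)X_1+ab\,X_0=0$, and hence the concurrence of the three secants through the pairs \eqref{stencil} is equivalent to
\[
\det\begin{pmatrix}x_{12}x_{34}&-(x_{12}+x_{34})&1\\ x_{13}x_{24}&-(x_{13}+x_{24})&1\\ x_{14}x_{23}&-(x_{14}+x_{23})&1\end{pmatrix}=0.
\]

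All that remains is a polynomial identity: this determinant equals, up to a sign,
\[
(x_{12}-x_{24})(x_{13}-x_{23})(x_{14}-x_{34})-(x_{12}-x_{23})(x_{14}-x_{24})(x_{13}-x_{34}),
\]
whose vanishing is exactly \eqref{skp}. Both expressions are homogeneous of total degree three in the six variables, and in fact multilinear in the three pairs $\{x_{12},x_{34}\}$, $\{x_{13},x_{24}\}$, $\{x_{14},x_{23}\}$, so their coincidence is fixed by a short list of coefficients; a quick sanity check on a single numerical substitution pins down the overall sign. I expect this coefficient comparison to be the main, but entirely mechanical, obstacle: all the geometry has been absorbed into the choice of parameterisation and the determinantal form of a line through two points on the Veronese conic, leaving only algebra.
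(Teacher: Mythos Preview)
Your argument is correct and coincides with the paper's own proof in Appendix~\ref{roots}: both reduce the concurrence condition to the vanishing of the same $3\times 3$ determinant in the elementary symmetric functions of the three pairs, and then identify that determinant with the numerator of $1$ minus the multi-ratio. The only cosmetic difference is that the paper phrases the setup dually---lines in $\mathbb{P}^2$ are one-dimensional subspaces of the space of quadratic polynomials, so the secant through parameters $a,b$ \emph{is} the polynomial $(x-a)(x-b)$ and concurrence is literally linear dependence of the three quadratics---whereas you work in coordinates on the Veronese conic and recover the same polynomial coefficients as line coefficients; your extra remark on M\"obius invariance, which the paper leaves implicit, is a welcome addition.
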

The calculation to verify this is straightforward, but a more affecting argument in the moduli space of quadratic polynomials, given by Dolgachev \cite{Dolg} (Proposition  9.4.9), is also provided in appendix \ref{roots}.
Correspondence with the octahedron is shown in Figure \ref{octahedron}.

The results of applying Lemma \ref{gg} are consistent with the observations of Adler \cite{AdlInc} on other multi-ratio expressions.
In those cases, there is an interesting connection between integrability and the generalisations of Pascal's theorem due to M\"obius.
The situation here is similar, but involves a different extension of Pascal's figure, described in Section \ref{Pascal}.

The main result in Section \ref{Gosset-Elte} uses Lemma \ref{gg} to associate a combinatorial polytope or tessellation identified by the Coxeter symbol $k_{i,j}$, inscribed in a conic, with a generalised form of Coble's birational group; the group is defined in Section \ref{Coble}.
The points on the conic become points of a circle pattern when the conic is viewed as a model of the inversive plane. 
This generalises the previous geometric view of equation (\ref{skp}) established by Konopelchenko, Schief and King \cite{ks1,ks2}, more precisely their circle-pattern is recovered in the case $k=0$ here.
The connection is explained in Section \ref{Clifford}.

A planar incidence-geometry view of Coble's group has been established by Kajiwara, Masuda, Noumi, Ohta and Yamada \cite{10E9}, with particular attention to the solution in the case $1_{5,2}$ in terms of elliptic functions, and its equivariant extension to the $1_{6,2}$ case, that corresponds to the elliptic Painlev\'e equation \cite{sc}.
The view established here treats uniformly all cases $k_{i,j}$, but turns out to be especially natural with-respect-to a simpler class of solutions that are rational; these are described in Section \ref{solutions}.

The Desargues maps introduced by Doliwa \cite{DDM} are combinatorially rich point-line configurations that are naturally considered in projective space of any dimension $M>1$, but, that are considered here in a restricted commutative and affine setting, $\mathbb{A}^M$.
In this case, they are shown in Section \ref{Desargues} to separate into $M$ independent systems each equivalent to case $k=0$ of Coble's group,
and this is applied to establish the natural determining-set for the configurations. 
Throughout this paper, the projective or affine space is defined over a field, the case of a skew-field is discussed briefly at the end of Section \ref{Desargues}, where it is argued that the initial value problem and therefore the diagonalisation of the Desargues maps, is applicable to the general projective case, but raising this to the level of a proof is outside the scope of this work.

\section{Birational group}\label{Coble}
Coble's group \cite{COBI,COBII} has the following generalisation, connecting it with equation (\ref{skp}).
\begin{definition}[\cite{SKP}]\label{actions}
Let integers $i,j$ be positive and $k$ be non-negative. 
Introduce actions on the arrays of variables 
\begin{equation}
\left[\begin{array}{c}
y_0\\
y_1\\
\vdots\\
y_k
\end{array}\right],
\quad
\left[\begin{array}{cccc}
y_{00} & y_{10} & \cdots & y_{i0}\\
y_{01} & y_{11} & \cdots & y_{i1}\\
\vdots & \vdots & & \vdots \\
y_{0j} & y_{1j} & \cdots & y_{ij}
\end{array}\right],\label{Xdef}
\end{equation}
as follows:
\begin{equation}\label{rational_actions}
\begin{array}{rll}
t_m: & y_{(m-1)n} \leftrightarrow y_{mn}, & m \in \{1,\ldots,i\},\  n\in\{0,\ldots,j\},\\
s_n: & y_{m(n-1)} \leftrightarrow y_{mn}, & m \in \{0,\ldots,i\},\  n\in\{1,\ldots,j\},\\
r_n: & y_{n-1} \leftrightarrow y_{n}, & n\in\{1,\ldots,k\},\\
r_0: & y_0\leftrightarrow y_{00}, \ y_{mn}\rightarrow \bar{y}_{mn}, \ & m\in\{1,\ldots,i\},\  n\in\{1,\ldots,j\},
\end{array}
\end{equation}
where trivial actions are omitted, and $\bar{y}_{mn}$ is determined by the six-point multi-ratio equation imposed on variables
\begin{equation}\label{sixvars}
\{y_0,\bar{y}_{mn}\},\{y_{0n},y_{m0}\},\{y_{00},y_{mn}\},
\end{equation}
cf. Lemma \ref{gg}, i.e., 
\begin{equation}\label{yeqn}
\frac{(y_0-y_{m0})(y_{0n}-y_{mn})(y_{00}-\bar{y}_{mn})}{(y_0-y_{mn})(y_{00}-y_{m0})(y_{0n}-\bar{y}_{mn})}=1.
\end{equation}
\end{definition}

\begin{figure}
\begin{center}
\begin{tikzpicture}[thick]
  \node at (-3,0) [draw,circle,fill=white,minimum size=8pt,inner sep=0pt]{};
  \tikzstyle{every node}=[draw,circle,fill=black,minimum size=4pt,inner sep=0pt];
  \node (r0) at (0,0) [label={[label distance=2pt]below:$r_0$}]{};
  \node (r1) at (-1.2,0) [label={[label distance=2pt]below:$r_1$}]{};
  \node (rk) at (-3,0) [label={[label distance=2pt]below:$r_k$}]{};
  \node (s1) at (1,0.8) [label={[label distance=2pt]above:$s_1$}]{};
  \node (t1) at (1,-0.8) [label={[label distance=2pt]below:$t_1$}]{};
  \node (sj) at (2.8,0.8) [label={[label distance=2pt]above:$s_j$}]{};
  \node (ti) at (2.8,-0.8) [label={[label distance=2pt]below:$t_i$}]{};
\draw[dashed] (rk)--(r1);
\draw (r1)--(r0)--(s1);
\draw (r0)--(t1);
\draw[dashed] (s1)--(sj);
\draw[dashed] (t1)--(ti);
\draw (r1)--(-1.6,0);
\draw (rk)--(-2.6,0);
\draw (s1)--(1.4,0.8);
\draw (sj)--(2.4,0.8);
\draw (t1)--(1.4,-0.8);
\draw (ti)--(2.4,-0.8);
\end{tikzpicture}
\end{center}
\caption{Coxeter graph with nodes corresponding to actions of Definition \ref{actions}.}
\label{cdd}
\end{figure}
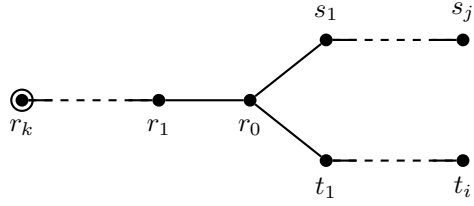
The generators (\ref{rational_actions}) satisfy relations encoded in the Coxeter graph of Figure \ref{cdd}.
Specifically, the group relations correspond to identities in the field of rational functions in the variables (\ref{Xdef}).
The original group of Coble can be viewed as a partial integration of this one that is available if $k=1$.

To give the description of this group afforded by Lemma \ref{gg}, is the principal aim of this paper. 

\section{Extension of Pascal's hexagon}\label{Pascal}
A suitable starting point is an elementary but intriguing extension to Pascal's figure of a hexagon inscribed in a conic. 
\begin{prop}\label{direct}
Consider six points on a conic, $C$, labelled as follows:
\begin{equation} \label{exhex}
p_{\{1,3\}},p_{\{2,4\}},p_{\{1,5\}},p_{\{2,3\}},p_{\{1,4\}},p_{\{2,5\}}.
\end{equation}
Use $a+b$ to denote the line determined by two points.
By Pascal's theorem, the points
\begin{equation}\label{pascal-points}
\begin{split}
p_3:=(p_{\{1,4\}}+p_{\{2,5\}})\cap(p_{\{2,4\}}+p_{\{1,5\}}),\\
p_4:=(p_{\{1,3\}}+p_{\{2,5\}})\cap(p_{\{2,3\}}+p_{\{1,5\}}),\\
p_5:=(p_{\{1,3\}}+p_{\{2,4\}})\cap(p_{\{2,3\}}+p_{\{1,4\}}),
\end{split}
\end{equation}
are collinear, determining a line $\pi$,
\begin{equation}\label{pascal-line}
p_3+p_4 = p_4+p_5 = p_5+p_3 =: \pi. 
\end{equation}

Add one further point to the figure, $p_{\{1,2\}}$, chosen freely on $C$. 
Lines connecting this point with the determining points of the Pascal line (\ref{pascal-points}), intersect $C$ at three further points:
\begin{equation}\label{augment}
\begin{split}
p_{\{4,5\}}:=(p_{\{1,2\}}+p_3)\cap(C \setminus p_{\{1,2\}}),\\
p_{\{3,5\}}:=(p_{\{1,2\}}+p_4)\cap(C \setminus p_{\{1,2\}}),\\
p_{\{3,4\}}:=(p_{\{1,2\}}+p_5)\cap(C \setminus p_{\{1,2\}}).\\
\end{split}
\end{equation}
The following incidences then occur, determining two final points on $\pi$:
\begin{equation}\label{new-points}
\begin{split}
\pi\cap (p_{\{2,3\}}+p_{\{4,5\}}) = \pi \cap (p_{\{2,4\}}+p_{\{3,5\}}) = \pi \cap (p_{\{2,5\}}+p_{\{3,4\}}) =: p_1,\\ 
\pi\cap (p_{\{1,3\}}+p_{\{4,5\}}) = \pi \cap (p_{\{1,5\}}+p_{\{3,4\}}) = \pi \cap (p_{\{1,4\}}+p_{\{3,5\}}) =: p_2.\\ 
\end{split}
\end{equation}
See Figure \ref{conic}.
\end{prop}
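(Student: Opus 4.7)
The plan splits the proof according to the two parts of the statement. For the first part, I would read the six labelled points (\ref{exhex}) as vertices of an inscribed hexagon in the cyclic order $p_{\{1,3\}},p_{\{2,5\}},p_{\{1,4\}},p_{\{2,3\}},p_{\{1,5\}},p_{\{2,4\}}$; the three intersections of opposite sides are then precisely $p_4$, $p_3$ and $p_5$ of (\ref{pascal-points}), so the collinearity (\ref{pascal-line}) defining $\pi$ is immediate from Pascal's theorem.

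For the second part I would use Lemma \ref{gg} in both directions. By the construction (\ref{augment}) each of the chords $p_{\{1,2\}}+p_{\{4,5\}}$, $p_{\{1,2\}}+p_{\{3,5\}}$, $p_{\{1,2\}}+p_{\{3,4\}}$ passes respectively through $p_3$, $p_4$, $p_5$, so at each of the three Pascal points three secants of $C$ concur. The forward direction of Lemma \ref{gg} then supplies three multi-ratio identities among the parameters $x_{ab}$, indexed by the partitions of the $4$-subsets $\{1,2,4,5\}$, $\{1,2,3,5\}$, $\{1,2,3,4\}$ into three pairs. The target incidences (\ref{new-points}) assert that at each of two further points -- associated with the partitions of $\{2,3,4,5\}$ and $\{1,3,4,5\}$ -- three chords concur, and that both concurrency points lie on $\pi$. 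By the reverse direction of Lemma \ref{gg} the new concurrences are equivalent to two further multi-ratio identities, and the claim reduces to (a) the algebraic implication that the three construction identities force the two target identities, together with (b) the verification that each new concurrency point meets $\pi$.

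The main obstacle is step (a). I would parametrise $C$ by $t\mapsto(t,t^2)$, so that the chord through the points with parameters $x_{ab}$ and $x_{cd}$ has equation $Y=(x_{ab}+x_{cd})X-x_{ab}x_{cd}$ and each concurrency becomes the vanishing of a $3\times 3$ determinant whose rows are $(x_{ab}+x_{cd},\,1,\,x_{ab}x_{cd})$. The construction has a manifest $S_2\times S_3$-symmetry permuting $\{1,2\}$ and $\{3,4,5\}$, so it suffices to treat the case of $p_1$ alone, with $p_2$ following by swapping $1$ and $2$. Using the three construction identities to solve for $x_{34},x_{35},x_{45}$ in terms of the remaining seven parameters, the single outstanding concurrency identity is then verified by direct elimination, and step (b) becomes the check that the concurrency point, computed as the intersection of any two of the three chords, satisfies the linear equation of $\pi$. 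Conceptually, the concurrency aspect alone admits a cleaner formulation via the chord-to-quadratic dictionary that sends a chord with endpoints $x_{ab}$, $x_{cd}$ to the point $[\,1:-(x_{ab}+x_{cd}):x_{ab}x_{cd}\,]\in\mathbb{P}^2$: under this correspondence the five matching-triples of chords become five collinear triples in a $15$-point planar configuration, and the algebraic claim is precisely that three of these collinearities force the remaining two.
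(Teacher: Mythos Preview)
Your argument is correct, but it follows a markedly different route from the paper's. You convert everything to analytic identities via Lemma~\ref{gg}: the construction (\ref{augment}) yields three multi-ratio relations fixing $x_{34},x_{35},x_{45}$, and you then propose to verify by direct elimination both the remaining multi-ratio identity (step (a), giving concurrency at $p_1$) and the linear incidence with $\pi$ (step (b)). This works, and your $S_2\times S_3$ symmetry reduction is sound; the chord-to-quadratic dictionary you sketch at the end is exactly the mechanism behind Lemma~\ref{gg} as proved in Appendix~\ref{roots}.

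The paper, by contrast, stays entirely synthetic and avoids any computation. It applies Pascal's theorem a \emph{second} time, to the inscribed hexagon $p_{\{1,2\}},p_{\{3,5\}},p_{\{2,4\}},p_{\{1,5\}},p_{\{2,3\}},p_{\{4,5\}}$. Two of its three Pascal points are, by the very definitions (\ref{augment}), equal to $p_3$ and $p_4$; hence its Pascal line coincides with $\pi$, and the third Pascal point $(p_{\{3,5\}}+p_{\{2,4\}})\cap(p_{\{2,3\}}+p_{\{4,5\}})$ automatically lies on $\pi$. This single step delivers one of the equalities in (\ref{new-points}) together with the ``on $\pi$'' claim simultaneously, so your step (b) never arises as a separate obligation. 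The remaining equalities follow by the same symmetry you invoke. What the paper's approach buys is brevity and a transparent explanation of \emph{why} the incidences hold (Pascal forcing a shared line); what yours buys is a self-contained reduction to polynomial algebra that does not require spotting the auxiliary hexagon, and that aligns directly with the birational-group formulation in Definition~\ref{actions}.
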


\begin{figure}[t]
{\begin{center}
\begin{tikzpicture}[line cap=round,line join=round,>=triangle 45,x=1.0cm,y=1.0cm,scale=0.145]
\clip(-35.59888079033262,-30.232241679278378) rectangle (46.78096107862958,33.8935135298561);
\draw [rotate around={-133.38297790197979:(6.1357134844477486,-0.30966575866374424)}] (6.1357134844477486,-0.30966575866374424) ellipse (17.509963157450795cm and 9.99119103801839cm);
\draw [color=cqcqcq,domain=-35.59888079033262:46.78096107862958] plot(\x,{(-78.82946684184145--22.976250103022032*\x)/25.183008587493124});
\draw [color=cqcqcq,domain=-35.59888079033262:46.78096107862958] plot(\x,{(--128.46110013262887-28.250385712223654*\x)/-10.0187591686387});
\draw [color=cqcqcq,domain=-35.59888079033262:46.78096107862958] plot(\x,{(-66.60079804345239--11.578879154390894*\x)/-5.157806967309027});
\draw [color=cqcqcq,domain=-35.59888079033262:46.78096107862958] plot(\x,{(-223.4018084705489--11.737843848648456*\x)/-1.3275754186954742});
\draw [color=cqcqcq,domain=-35.59888079033262:46.78096107862958] plot(\x,{(-253.27278635200958--17.531667146856694*\x)/-6.650461854672489});
\draw [color=cqcqcq,domain=-35.59888079033262:46.78096107862958] plot(\x,{(-34.97386501151021-7.152091984491721*\x)/10.741807104331148});
\draw [color=cqcqcq,domain=-35.59888079033262:46.78096107862958] plot(\x,{(--278.2341908159551-15.92839991180345*\x)/-19.589226573446613});
\draw [color=cqcqcq,domain=-35.59888079033262:46.78096107862958] plot(\x,{(-191.7552463728867--6.028724907937087*\x)/23.590578456053173});
\draw [color=cqcqcq,domain=-35.59888079033262:46.78096107862958] plot(\x,{(-55.03514491049369--4.729150323050161*\x)/0.6565228987825513});
\draw [color=cqcqcq,domain=-35.59888079033262:46.78096107862958] plot(\x,{(--59.563420054100504--10.526395624808982*\x)/2.0390703024647197});
\draw [color=cqcqcq,domain=-35.59888079033262:46.78096107862958] plot(\x,{(-171.22324206698414--7.99681839879012*\x)/-17.230125074041457});
\draw [color=cqcqcq,domain=-35.59888079033262:46.78096107862958] plot(\x,{(-51.47519356765841-0.6561566078271852*\x)/-4.27052048397473});
\draw [color=cqcqcq,domain=-35.59888079033262:46.78096107862958] plot(\x,{(--153.17385239496596-13.206499745226612*\x)/14.310119523906032});
\draw [color=cqcqcq,domain=-35.59888079033262:46.78096107862958] plot(\x,{(--82.55313902047814--15.216389282238854*\x)/-4.8822871302772635});
\draw [color=cqcqcq,domain=-35.59888079033262:46.78096107862958] plot(\x,{(-73.05181583729615--1.2402985595989318*\x)/-3.97237150213771});
\draw [domain=-35.59888079033262:46.78096107862958] plot(\x,{(--251.48020971012843-39.54525108267488*\x)/33.56762582776311});
\begin{scriptsize}
\draw [fill=black] (10.239019887098344,-10.073191226032082) circle (5.0pt);
\draw (11.218595357929479,-8.76240918300474) node {12};
\draw [fill=black] (9.330354628786736,13.487200828761699) circle (5.0pt);
\draw (10.272787758974689,14.788200030969426) node {13};
\draw [fill=black] (-7.609761972593947,-10.073191226032082) circle (5.0pt);
\draw (-6.657168262316051,-8.76240918300474) node {14};
\draw [fill=black] (-0.6884045398519641,-14.763184883461953) circle (5.0pt);
\draw (0.2472272100539159,-13.39686641788319) node {24};
\draw [fill=black] (17.573246614899176,12.903058876989952) circle (5.0pt);
\draw (18.501313869881365,14.220715471596554) node {23};
\draw [fill=black] (5.0812129197893166,1.5056879283588123) circle (5.0pt);
\draw (5.732911283991697,2.871024284139124) node {5};
\draw [fill=black] (1.6706969595531938,9.162033427131616) circle (5.0pt);
\draw (2.611746207440891,10.53206583567289) node {34};
\draw [fill=black] (18.90082203359465,1.1652150283414962) circle (5.0pt);
\draw (19.82544450841807,2.49270124455721) node {15};
\draw [fill=black] (15.980816483459225,-4.044466318094995) circle (5.0pt);
\draw (16.89344095165822,-2.709240549694112) node {25};
\draw [fill=black] (20.980826991429492,-17.225283210523802) circle (5.0pt);
\draw (21.622478946432167,-15.855966175165634) node {4};
\draw [fill=black] (-5.570691670129228,0.4532043987769008) circle (5.0pt);
\draw (-4.670972304510992,1.7360551653933811) node {35};
\draw [fill=black] (10.895542785880895,-5.344040902981921) circle (5.0pt);
\draw (11.596918397511397,-4.033371188230813) node {3};
\draw [fill=black] (13.600875112761466,14.143357436588884) circle (5.0pt);
\draw (14.528921954271244,15.450265350237776) node {45};
\draw [fill=black] (-3.425529044949728,11.527285780325508) circle (5.0pt);
\draw (-2.779357106601412,12.896584833059855) node {2};
\draw [fill=black] (-12.586798836333617,22.31996787215108) circle (5.0pt);
\draw (-11.953690816462874,23.678791461144414) node {1};
\end{scriptsize}
\end{tikzpicture}
\end{center}}
\caption{
Illustration of Proposition \ref{direct},
points $p_{\{\alpha,\beta\}}$ and $p_\gamma$ are labelled as $\alpha\beta$ and $\gamma$ respectively.
The combinatorial symmetries of this figure correspond to free permutation of the five indices.
}
\label{conic}
\end{figure}
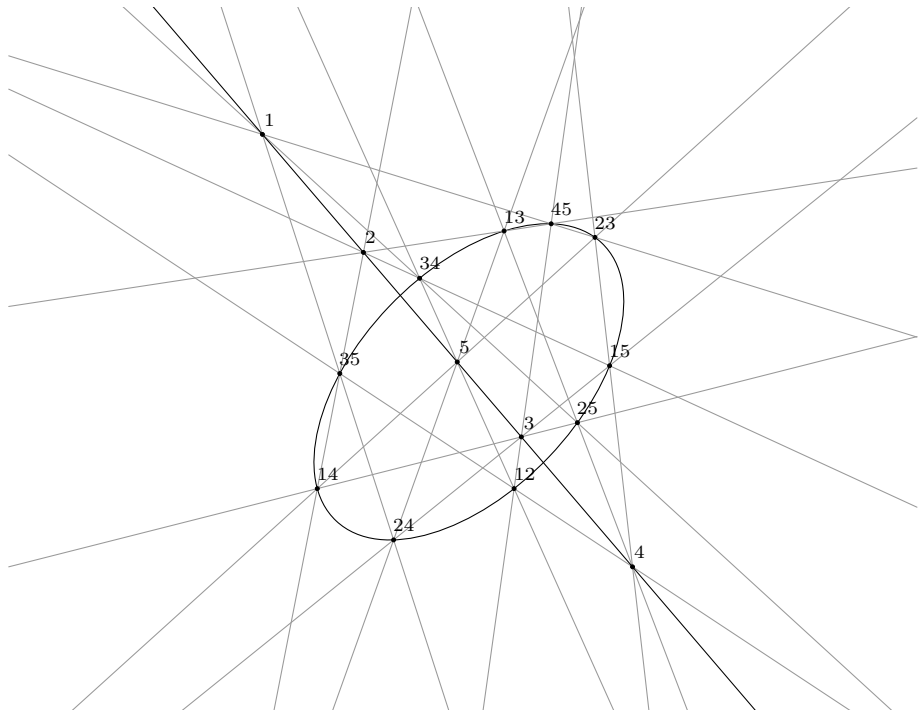
\begin{proof}
Pascal's theorem can be applied to demonstrate that 
\begin{equation}\label{showing}
\pi\cap (p_{\{2,3\}}+p_{\{4,5\}}) = \pi \cap (p_{\{2,4\}}+p_{\{3,5\}}),
\end{equation}
and the other equalities in (\ref{new-points}) are established in the same way.
Consider the inscribed hexagon whose vertices are, in consecutive order,
\begin{equation}\label{hex2}
p_{\{1,2\}},p_{\{3,5\}},p_{\{2,4\}},p_{\{1,5\}},p_{\{2,3\}},p_{\{4,5\}}.
\end{equation}
Because of how $p_{\{1,2\}}$, $p_{\{4,5\}}$ and $p_{\{3,5\}}$ have been defined, two of the three determining points for the Pascal line of this hexagon are already known, they are
\begin{equation}
\begin{split}
(p_{\{1,2\}}+p_{\{4,5\}})\cap(p_{\{2,4\}}+p_{\{1,5\}})=p_3,\\
(p_{\{1,2\}}+p_{\{3,5\}})\cap(p_{\{2,3\}}+p_{\{1,5\}})=p_4.\\
\end{split}
\end{equation}
Therefore, it is common with the original Pascal line (\ref{pascal-line}), and the third determining point from hexagon (\ref{hex2}) must also be somewhere on it, i.e.,
\begin{equation}
(p_{\{3,5\}}+p_{\{2,4\}})\cap(p_{\{2,3\}}+p_{\{4,5\}})\in\pi,\\
\end{equation}
confirming (\ref{showing}).
\end{proof}

That the actions in the case $(i,j,k)=(2,1,0)$ of Definition \ref{actions} satisfy relations encoded in the corresponding Coxeter graph (Figure \ref{cdd}), can be regarded as an analytic formulation of Proposition \ref{direct}. 
Identify the elements of the arrays (\ref{Xdef}) with points on $C$ as follows,
\begin{equation}\label{Sdef}
\left[\begin{array}{c}
p_{\{1,2\}}\\
\end{array}\right],
\quad
\left[\begin{array}{cccc}
p_{\{1,3\}} & p_{\{1,4\}} & p_{\{1,5\}}\\
p_{\{2,3\}} & p_{\{2,4\}} & p_{\{2,5\}}
\end{array}\right],
\end{equation}
and the actions (\ref{rational_actions}) with the permutations of indices that generate the combinatorial symmetries of the figure,
\begin{equation}\label{indact}
s_1: 1\leftrightarrow 2, \
r_0: 2\leftrightarrow 3, \
t_1: 3\leftrightarrow 4, \
t_2: 4\leftrightarrow 5.
\end{equation}
The array on the right in (\ref{Sdef}) corresponds to the vertices of the original hexagon (\ref{exhex}), and on the left, to the added point, $p_{\{1,2\}}$.
The actions $s_1$, $t_1$ and $t_2$ in (\ref{indact}) generate the subgroup of combinatorial symmetries of the initial hexagon, and the induced action on the array on the right in (\ref{Sdef}) is simply to permute the rows and columns, confirming identification with the actions in (\ref{rational_actions}).
The index permutation $2\leftrightarrow 3$, $r_0$ in (\ref{indact}), induces transposition of the first array elements in (\ref{Sdef}), but it induces a set of geometric operations on entries corresponding to $p_{\{2,4\}}$ and $p_{\{2,5\}}$. 
The new points, $p_{\{3,4\}}$ and $p_{\{3,5\}}$, are determined from the given ones by (\ref{pascal-points}) and (\ref{augment}).
Performing these operations analytically using Lemma \ref{gg}, by identifying points on $C$ with corresponding values of a parameter, confirms the action of $r_0$ listed in Definition \ref{actions}.

This case illustrates the general situation. 
The arrays (\ref{Xdef}) correspond to the set of points on a conic from which a figure is determined, and elements of the group generated by (\ref{rational_actions}) obtain the image of this determining-set under corresponding combinatorial symmetries of the figure.
In the case just described the following can be checked by inspection.

\begin{remark}\label{strong}
With regard to the group of combinatorial symmetries of Figure \ref{conic}, each point is contained in some image of the determining-set, and the subgroup that fixes the determining-set point-wise, is trivial.
This means the figure can be recovered from the determining-set using the birational group, and the birational group faithfully represents the combinatorial symmetries of the figure.
\end{remark}

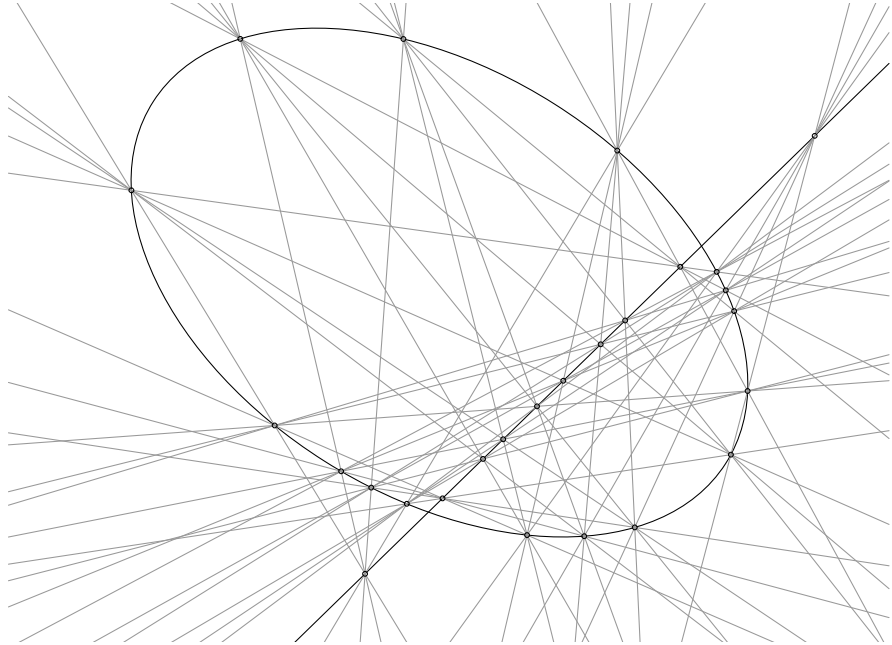
\begin{figure}[t!]
{\begin{center}
\begin{tikzpicture}[line cap=round,line join=round,>=triangle 45,x=1.0cm,y=1.0cm,scale=0.9]
\clip(-4.712094853111268,-5.802605272107039) rectangle (8.156577630748902,3.599021955179625);
\draw [rotate around={147.4262692515112:(1.5865178729429967,-0.5085611946674883)}] (1.5865178729429967,-0.5085611946674883) ellipse (4.960217582561025cm and 3.1219938259954394cm);
\draw [domain=-4.712094853111268:8.156577630748902] plot(\x,{(-5.988951335236108--1.111779660108572*\x)/1.1319493086218329});
\draw [color=cqcqcq,domain=-4.712094853111268:8.156577630748902] plot(\x,{(-19.71575034124377--1.1831313910930636*\x)/5.939655033120865});
\draw [color=cqcqcq,domain=-4.712094853111268:8.156577630748902] plot(\x,{(-16.958023288945896--3.7048259737905256*\x)/-7.094311325321739});
\draw [color=cqcqcq,domain=-4.712094853111268:8.156577630748902] plot(\x,{(--17.364268854468058-5.204272165403938*\x)/-3.075779054025278});
\draw [color=cqcqcq,domain=-4.712094853111268:8.156577630748902] plot(\x,{(--3.892201343093708--6.371176301002533*\x)/-1.4740192330971345});
\draw [color=cqcqcq,domain=-4.712094853111268:8.156577630748902] plot(\x,{(--17.561488393493565-3.542467241225112*\x)/1.9044563751610788});
\draw [color=cqcqcq,domain=-4.712094853111268:8.156577630748902] plot(\x,{(--21.05224351546152-3.14502386029777*\x)/-4.660872488290096});
\draw [color=cqcqcq,domain=-4.712094853111268:8.156577630748902] plot(\x,{(-10.741989490933975--5.897797812220428*\x)/-1.4578973409027012});
\draw [color=cqcqcq,domain=-4.712094853111268:8.156577630748902] plot(\x,{(--6.072335221908926--3.668588136860572*\x)/-5.433130606526194});
\draw [color=cqcqcq,domain=-4.712094853111268:8.156577630748902] plot(\x,{(--13.568333081968044--5.650891525132896*\x)/-3.414864730341296});
\draw [color=cqcqcq,domain=-4.712094853111268:8.156577630748902] plot(\x,{(-6.626972183853152--7.880101200492752*\x)/0.5603685352821968});
\draw [color=cqcqcq,domain=-4.712094853111268:8.156577630748902] plot(\x,{(-16.018433618408775--3.3560070270577715*\x)/-4.045800704456992});
\draw [color=cqcqcq,domain=-4.712094853111268:8.156577630748902] plot(\x,{(-3.5393409720681817--1.1267973516979155*\x)/-8.021033970080484});
\draw [color=cqcqcq,domain=-4.712094853111268:8.156577630748902] plot(\x,{(-16.16769704663247--1.6850375284669115*\x)/6.714431885172056});
\draw [color=cqcqcq,domain=-4.712094853111268:8.156577630748902] plot(\x,{(-19.245243692471853--2.3604329279662415*\x)/5.742888020287932});
\draw [color=cqcqcq,domain=-4.712094853111268:8.156577630748902] plot(\x,{(-20.130238470252895--3.186960970082357*\x)/1.4525485268981804});
\draw [color=cqcqcq,domain=-4.712094853111268:8.156577630748902] plot(\x,{(-21.174768668801253--2.839106461052004*\x)/4.783468416353424});
\draw [color=cqcqcq,domain=-4.712094853111268:8.156577630748902] plot(\x,{(-10.967956826590811-1.615240552020739*\x)/3.6882229780241467});
\draw [color=cqcqcq,domain=-4.712094853111268:8.156577630748902] plot(\x,{(-20.85570160958727--3.417330257072659*\x)/4.5285306354597274});
\draw [color=cqcqcq,domain=-4.712094853111268:8.156577630748902] plot(\x,{(--23.60388869769396-5.552126674434291*\x)/0.25514083542996513});
\draw [color=cqcqcq,domain=-4.712094853111268:8.156577630748902] plot(\x,{(--8.235267903602207-7.197704343118649*\x)/5.764358726486886});
\draw [color=cqcqcq,domain=-4.712094853111268:8.156577630748902] plot(\x,{(--3.2356394319136452-7.311021453523942*\x)/4.190698346237157});
\draw [color=cqcqcq,domain=-4.712094853111268:8.156577630748902] plot(\x,{(--13.24539991752483--0.5862271107998422*\x)/-3.8504985654903106});
\draw [color=cqcqcq,domain=-4.712094853111268:8.156577630748902] plot(\x,{(--20.252690962794166-1.423432322409337*\x)/-5.499814105221424});
\draw [color=cqcqcq,domain=-4.712094853111268:8.156577630748902] plot(\x,{(--17.64910824543158-0.5077359915937336*\x)/-6.911198898004988});
\draw [color=cqcqcq,domain=-4.712094853111268:8.156577630748902] plot(\x,{(--20.000476617704777-2.906651258528281*\x)/-5.180451164325164});
\draw [color=cqcqcq,domain=-4.712094853111268:8.156577630748902] plot(\x,{(-18.49344935619166--2.9386567239868966*\x)/5.4879502393942365});
\draw [color=cqcqcq,domain=-4.712094853111268:8.156577630748902] plot(\x,{(--21.818129257845403-5.665443784839584*\x)/-1.3185195448197633});
\draw [color=cqcqcq,domain=-4.712094853111268:8.156577630748902] plot(\x,{(--19.69399678650052-3.17895765530317*\x)/-5.0481093114947955});
\draw [color=cqcqcq,domain=-4.712094853111268:8.156577630748902] plot(\x,{(--15.595940722712484-1.9909549277126777*\x)/-6.591835957108728});
\draw [color=cqcqcq,domain=-4.712094853111268:8.156577630748902] plot(\x,{(--22.83940915159945-3.8785018765083055*\x)/-2.7712711262542133});
\draw [color=cqcqcq,domain=-4.712094853111268:8.156577630748902] plot(\x,{(-11.74959087014205--5.412411851730033*\x)/-1.9520890608143557});
\draw [color=cqcqcq,domain=-4.712094853111268:8.156577630748902] plot(\x,{(--24.958110897186515-3.7593704758932*\x)/-0.9800846866173325});
\draw [color=cqcqcq,domain=-4.712094853111268:8.156577630748902] plot(\x,{(--23.08718429177135-5.681999939729836*\x)/-0.4831238765868111});
\draw [color=cqcqcq,domain=-4.712094853111268:8.156577630748902] plot(\x,{(-11.540747944423615-0.9564013074116611*\x)/3.5520747813729754});
\draw [color=cqcqcq,domain=-4.712094853111268:8.156577630748902] plot(\x,{(-18.647775718876495--0.7238299842444622*\x)/4.735701169805585});
\draw [color=cqcqcq,domain=-4.712094853111268:8.156577630748902] plot(\x,{(--22.19348133134985-3.622751634623669*\x)/-2.0682173108516295});
\draw [color=cqcqcq,domain=-4.712094853111268:8.156577630748902] plot(\x,{(-3.9017103708871623-3.895072183401471*\x)/8.760482358943197});
\draw [color=cqcqcq,domain=-4.712094853111268:8.156577630748902] plot(\x,{(-9.22781431480426-5.096629941971832*\x)/6.617436366576078});
\draw [color=cqcqcq,domain=-4.712094853111268:8.156577630748902] plot(\x,{(--21.23030597771169-6.124281858761327*\x)/4.7852490933197025});
\draw [color=cqcqcq,domain=-4.712094853111268:8.156577630748902] plot(\x,{(--13.983229107223542-6.126019849843834*\x)/7.169140006837228});
\begin{scriptsize}
\draw (-2.9152332656234927,0.8507903246401439) circle (1.0pt);
\draw (1.06,3.08) circle (1.0pt);
\draw (4.185326977539396,1.4361603223981485) circle (1.0pt);
\draw (5.770420411804214,-0.6230879827080192) circle (1.0pt);
\draw (0.15012831957960968,-3.289438309920027) circle (1.0pt);
\draw (-1.3238909135175247,3.0817379910825062) circle (1.0pt);
\draw (6.089783352700475,-2.1063069188269634) circle (1.0pt);
\draw (1.1095479235141181,-3.7681118430057894) circle (1.0pt);
\draw (2.5178973409027012,-2.817797812220428) circle (1.0pt);
\draw (0.4996314647178033,-4.800101200492752) circle (1.0pt);
\draw (5.105800704456993,-0.27600702705777147) circle (1.0pt);
\draw (2.8668074327196327,-4.229283462441436) circle (1.0pt);
\draw (4.440467812969361,-4.115966352036143) circle (1.0pt);
\draw (-0.821415545304514,-2.614042910420697) circle (1.0pt);
\draw (0.5899692474790508,-3.5297392412363005) circle (1.0pt);
\draw (5.8930163398675415,-0.9290053819537856) circle (1.0pt);
\draw (5.638078558973846,-0.35078158593313064) circle (1.0pt);
\draw (3.012089060814356,-2.332411851730033) circle (1.0pt);
\draw (7.069868039317807,1.6530635570662369) circle (1.0pt);
\draw (1.6315807733396361,-3.68832154038418) circle (1.0pt);
\draw (4.300360860231998,-1.0670951344754231) circle (1.0pt);
\draw (2.2234887175331006,-3.1069605129966686) circle (1.0pt);
\draw (3.396028831272468,-1.955313315381449) circle (1.0pt);
\draw (3.942605603245965,-1.4184757263309458) circle (1.0pt);
\draw (3.702203100952585,-4.245839617331688) circle (1.0pt);
\draw (5.845249093319703,-3.044281858761327) circle (1.0pt);
\end{scriptsize}
\end{tikzpicture}
\end{center}}
\caption{
Example in the sequence of extensions of Pascal's hexagon corresponding to row three of Table \ref{sequence}.
}
\label{sixteen}
\end{figure}
\begin{table}[t!]
\begin{center}
\begin{tabular}{llll}
& Points on $C$ &  Points on $\pi$ & Pascal sub-figures\\
\hline
\rule{0pt}{3ex}
\parbox[c,c]{72pt}{
\begin{tikzpicture}
\tikzstyle{every node}=[draw,circle,fill=black,minimum size=2pt,inner sep=0pt];
\node (4) at (1.2,0.2) {};
\node (5) at (1.2,-0.2) {};
\node (6) at (1.6,-0.2) {};
\draw (4);
\draw (5)--(6);
\end{tikzpicture}
} &
$6\ [2]$ & $3\ [2]$ & $1$
\\
\rule{0pt}{3ex}
\parbox[c,c]{72pt}{
\begin{tikzpicture}
\tikzstyle{every node}=[draw,circle,fill=black,minimum size=2pt,inner sep=0pt];
\node at (1.2,0) [draw,circle,fill=white,minimum size=4pt,inner sep=0pt]{};
\node (4) at (1.2,0) {};
\node (5) at (1.6,0.2) {};
\node (6) at (1.6,-0.2) {};
\node (7) at (2.0,-0.2) {};
\draw (4)--(5);
\draw (4)--(6)--(7);
\end{tikzpicture}
} &
$10\ [3]$ & $5\ [3]$ & $10$
\\
\rule{0pt}{3ex}
\parbox[c,c]{72pt}{
\begin{tikzpicture}
\tikzstyle{every node}=[draw,circle,fill=black,minimum size=2pt,inner sep=0pt];
\node at (0.8,0) [draw,circle,fill=white,minimum size=4pt,inner sep=0pt]{};
\node (3) at (0.8,0) {};
\node (4) at (1.2,0) {};
\node (5) at (1.6,0.2) {};
\node (6) at (1.6,-0.2) {};
\node (7) at (2.0,-0.2) {};
\draw (3)--(4)--(5);
\draw (4)--(6)--(7);
\end{tikzpicture}
} &
$16 \ [5]$ & $10\ [4]$ & $80$
\\
\rule{0pt}{3ex}
\parbox[c,c]{72pt}{
\begin{tikzpicture}
\tikzstyle{every node}=[draw,circle,fill=black,minimum size=2pt,inner sep=0pt];
\node at (0.4,0) [draw,circle,fill=white,minimum size=4pt,inner sep=0pt]{};
\node (2) at (0.4,0) {};
\node (3) at (0.8,0) {};
\node (4) at (1.2,0) {};
\node (5) at (1.6,0.2) {};
\node (6) at (1.6,-0.2) {};
\node (7) at (2.0,-0.2) {};
\draw (2)--(3)--(4)--(5);
\draw (4)--(6)--(7);
\end{tikzpicture}
} &
$27\ [10]$ & $27\ [5]$ & $720$
\\
\rule{0pt}{3ex}
\parbox[c,c]{72pt}{
\begin{tikzpicture}
\tikzstyle{every node}=[draw,circle,fill=black,minimum size=2pt,inner sep=0pt];
\node at (0,0) [draw,circle,fill=white,minimum size=4pt,inner sep=0pt]{};
\node (1) at (0,0) {};
\node (2) at (0.4,0) {};
\node (3) at (0.8,0) {};
\node (4) at (1.2,0) {};
\node (5) at (1.6,0.2) {};
\node (6) at (1.6,-0.2) {};
\node (7) at (2.0,-0.2) {};
\draw (1)--(2)--(3)--(4)--(5);
\draw (4)--(6)--(7);
\end{tikzpicture}
} &
$56\ [27]$ & $126\ [6]$ & $10080$
\\
\rule{0pt}{3ex}
\parbox[c,c]{72pt}{
\begin{tikzpicture}
\tikzstyle{every node}=[draw,circle,fill=black,minimum size=2pt,inner sep=0pt];
\node at (-0.4,0) [draw,circle,fill=white,minimum size=4pt,inner sep=0pt]{};
\node (0) at (-0.4,0) {};
\node (1) at (0,0) {};
\node (2) at (0.4,0) {};
\node (3) at (0.8,0) {};
\node (4) at (1.2,0) {};
\node (5) at (1.6,0.2) {};
\node (6) at (1.6,-0.2) {};
\node (7) at (2.0,-0.2) {};
\draw (0)--(1)--(2)--(3)--(4)--(5);
\draw (4)--(6)--(7);
\end{tikzpicture}
} &
$240\ [126]$ & $2160\ [7]$ & $483840$
\\
\end{tabular}
\caption{
Enumerative description of the finite figures related to case $(i,j)=(2,1)$ of Definition \ref{actions}.
The first row corresponds to a single hexagon, and is a case ($k=-1$) not included in Definition \ref{actions}.
The second row corresponds to Figure \ref{conic}, and the third row to Figure \ref{sixteen}.
The number of lines (not including $\pi$) through each point is included in brackets.
}
\label{sequence}
\end{center}
\end{table}
Loosely speaking, incrementing $k$ iterates the procedure described in Proposition \ref{direct} by adding, at each step, a further freely chosen point on $C$, joining it to the previously marked points of $\pi$, and then adding more lines and points until the figure is symmetric.
For example, Figure \ref{sixteen} shows the next in the sequence after Figure \ref{conic}. 
After some iterations the symmetrisation procedure fails to terminate, corresponding to when the associated reflection group is affine.
Some details of this sequence are given in Table \ref{sequence}.

On the other hand, larger values of $i$ and $j$, due to the possible choices of three columns and two rows, or two columns and three rows, from the array on the right in (\ref{Xdef}), mean the determining-set involves multiple Pascal figures each with a different Pascal line.
Under the group action, these Pascal lines form a configuration $\Pi$, the cases of finite configurations are listed in Table \ref{main}.
The affine cases lead to configurations with a finite number of lines through each point and points on each line, but which are unbounded in number of points and lines.

The combinatorial description of the general figure is obtained via a connection with certain uniform polytopes and tessellations.
\begin{table}[t!]
\begin{center}
\begin{tabular}{lll}
&Points on $C$&Configuration $\Pi$\\
\hline
\rule{0pt}{3ex}
\parbox[c,c]{72pt}{
\begin{tikzpicture}
\tikzstyle{every node}=[draw,circle,fill=black,minimum size=2pt,inner sep=0pt];
\node at (0,0) [draw,circle,fill=white,minimum size=4pt,inner sep=0pt]{};
\node (1) at (0,0) {};
\node (2) at (0.4,0.2) {};
\node (3) at (1.0,0.2) {};
\node (4) at (0.4,-0.2) {};
\node (5) at (1.0,-0.2) {};
\draw (4)--(1)--(2);
\draw[dashed] (4)--(5);
\draw[dashed] (2)--(3);
\end{tikzpicture}
} &
$\frac{(i+j+2)!}{(i+1)!(j+1)!}$ $\left[\frac{i(i+1)j(j+1)}{4}\right]$ & $(p_{i+j-2},q_5)$, $p=\frac{(i+j+2)!}{(i-1)!(j-1)!4!}$
\\
\rule{0pt}{3ex}
\parbox[c,c]{72pt}{
\begin{tikzpicture}
\tikzstyle{every node}=[draw,circle,fill=black,minimum size=2pt,inner sep=0pt];
\node at (0,0) [draw,circle,fill=white,minimum size=4pt,inner sep=0pt]{};
\node (1) at (0,0) {};
\node (2) at (0.4,0) {};
\node (4) at (0.8,0.2) {};
\node (3) at (0.8,-0.2) {};
\node (5) at (1.4,0.2) {};
\draw (1)--(2)--(3);
\draw (2)--(4);
\draw [dashed] (4)--(5);
\end{tikzpicture}
} &
$2^{j+2}$ $\left[\frac{(j+3)!}{(j-1)!4!}\right]$ & $(p_{j-1},q_{10})$, $p=\frac{2^{j-1}(j+3)!}{(j-1)!4!}$
\\
\rule{0pt}{3ex}
\parbox[c,c]{72pt}{
\begin{tikzpicture}
\tikzstyle{every node}=[draw,circle,fill=black,minimum size=2pt,inner sep=0pt];
\node at (0,0) [draw,circle,fill=white,minimum size=4pt,inner sep=0pt]{};
\node (1) at (0,0) {};
\node (2) at (0.4,0) {};
\node (3) at (0.8,0) {};
\node (4) at (1.2,0.2) {};
\node (5) at (1.2,-0.2) {};
\node (6) at (1.6,-0.2) {};
\draw (1)--(2)--(3)--(4);
\draw (3)--(5)--(6);
\end{tikzpicture}
} &
$27$ $[10]$ & $(27_1,1_{27})$
\\
\rule{0pt}{3ex}
\parbox[c,c]{72pt}{
\begin{tikzpicture}
\tikzstyle{every node}=[draw,circle,fill=black,minimum size=2pt,inner sep=0pt];
\node at (0,0) [draw,circle,fill=white,minimum size=4pt,inner sep=0pt]{};
\node (1) at (0,0) {};
\node (2) at (0.4,0) {};
\node (3) at (0.8,0) {};
\node (4) at (1.2,0) {};
\node (5) at (1.6,0.2) {};
\node (6) at (1.6,-0.2) {};
\node (7) at (2.0,-0.2) {};
\draw (1)--(2)--(3)--(4)--(5);
\draw (4)--(6)--(7);
\end{tikzpicture}
} &
$56$ $[27]$ & $(126_1,1_{126})$
\\
\rule{0pt}{3ex}
\parbox[c,c]{72pt}{
\begin{tikzpicture}
\tikzstyle{every node}=[draw,circle,fill=black,minimum size=2pt,inner sep=0pt];
\node at (0,0) [draw,circle,fill=white,minimum size=4pt,inner sep=0pt]{};
\node (1) at (0,0) {};
\node (2) at (0.4,0) {};
\node (3) at (0.8,0) {};
\node (4) at (1.2,0) {};
\node (5) at (1.6,0) {};
\node (6) at (2.0,0.2) {};
\node (7) at (2.0,-0.2) {};
\node (8) at (2.4,-0.2) {};
\draw (1)--(2)--(3)--(4)--(5)--(6);
\draw (5)--(7)--(8);
\end{tikzpicture}
} &
$240$ $[126]$ & $(2160_1,1_{2160})$
\\
\rule{0pt}{3ex}
\parbox[c,c]{72pt}{
\begin{tikzpicture}
\tikzstyle{every node}=[draw,circle,fill=black,minimum size=2pt,inner sep=0pt];
\node at (0,0) [draw,circle,fill=white,minimum size=4pt,inner sep=0pt]{};
\node (1) at (0,0) {};
\node (2) at (0.4,0) {};
\node (3) at (0.8,0.2) {};
\node (4) at (1.2,0.2) {};
\node (5) at (0.8,-0.2) {};
\node (6) at (1.2,-0.2) {};
\draw (1)--(2)--(3)--(4);
\draw (2)--(5)--(6);
\end{tikzpicture}
} &
$72$ $[30]$ & $(270_2,54_{10})$
\\
\rule{0pt}{3ex}
\parbox[c,c]{72pt}{
\begin{tikzpicture}
\tikzstyle{every node}=[draw,circle,fill=black,minimum size=2pt,inner sep=0pt];
\node at (0,0) [draw,circle,fill=white,minimum size=4pt,inner sep=0pt]{};
\node (1) at (0,0) {};
\node (2) at (0.4,0) {};
\node (3) at (0.8,0) {};
\node (4) at (1.2,0.2) {};
\node (5) at (1.2,-0.2) {};
\node (6) at (1.6,-0.2) {};
\node (7) at (2.0,-0.2) {};
\draw (1)--(2)--(3)--(4);
\draw (3)--(5)--(6)--(7);
\end{tikzpicture}
} &
$126$ $[60]$ & $(756_2,56_{27})$
\\
\rule{0pt}{3ex}
\parbox[c,c]{72pt}{
\begin{tikzpicture}
\tikzstyle{every node}=[draw,circle,fill=black,minimum size=2pt,inner sep=0pt];
\node at (0,0) [draw,circle,fill=white,minimum size=4pt,inner sep=0pt]{};
\node (1) at (0,0) {};
\node (2) at (0.4,0) {};
\node (3) at (0.8,0.2) {};
\node (4) at (1.2,0.2) {};
\node (5) at (0.8,-0.2) {};
\node (6) at (1.2,-0.2) {};
\node (7) at (1.6,-0.2) {};
\draw (1)--(2)--(3)--(4);
\draw (2)--(5)--(6)--(7);
\end{tikzpicture}
} &
$576$ $[105]$ & $(7560_3,2268_{10})$
\\
\rule{0pt}{3ex}
\parbox[c,c]{72pt}{
\begin{tikzpicture}
\tikzstyle{every node}=[draw,circle,fill=black,minimum size=2pt,inner sep=0pt];
\node at (0,0) [draw,circle,fill=white,minimum size=4pt,inner sep=0pt]{};
\node (1) at (0,0) {};
\node (2) at (0.4,0) {};
\node (3) at (0.8,0) {};
\node (4) at (1.2,0.2) {};
\node (5) at (1.2,-0.2) {};
\node (6) at (1.6,-0.2) {};
\node (7) at (2.0,-0.2) {};
\node (8) at (2.4,-0.2) {};
\draw (1)--(2)--(3)--(4);
\draw (3)--(5)--(6)--(7)--(8);
\end{tikzpicture}
} &
$2160$ $[280]$& $(60480_3,6720_{27})$
\\
\rule{0pt}{3ex}
\parbox[c,c]{72pt}{
\begin{tikzpicture}
\tikzstyle{every node}=[draw,circle,fill=black,minimum size=2pt,inner sep=0pt];
\node at (0,0) [draw,circle,fill=white,minimum size=4pt,inner sep=0pt]{};
\node (1) at (0,0) {};
\node (2) at (0.4,0) {};
\node (3) at (0.8,0.2) {};
\node (4) at (1.2,0.2) {};
\node (5) at (0.8,-0.2) {};
\node (6) at (1.2,-0.2) {};
\node (7) at (1.6,-0.2) {};
\node (8) at (2.0,-0.2) {};
\draw (1)--(2)--(3)--(4);
\draw (2)--(5)--(6)--(7)--(8);
\end{tikzpicture}
} &
$17280$ $[280]$& $(604800_4,241920_{10})$
\\
\end{tabular}
\caption{
Enumerative description of finite figures related to Definition \ref{actions}.
The number of secants through each point on $C$ is included in brackets.
The Pascal lines form a configuration, $\Pi$.
The notation $(p_\alpha,q_\beta)$ denotes a configuration of $p$ points and $q$ lines with $\alpha$ lines through each point and $\beta$ points on each line; the constraint $p\alpha=q\beta$ corresponds to the total number of point-line incidences.
}
\label{main}
\end{center}
\end{table}

\section{Gosset-Elte figures inscribed in a conic}\label{Gosset-Elte}
A combinatorial polytope in a projective space, is a point-line figure whose incidences correspond to the vertex-edge incidences of the polytope.
Concerning the octahedron and its higher dimensional counterparts, namely the cross-polytopes, we use the following terminology.
\begin{definition}\label{central}
A combinatorial cross-polytope with the property that the lines joining opposing vertices (its axes) are all concurrent at a point, will be called a cross-polytope with a centre.
\end{definition}
This allows to describe Figure \ref{skp-illustration2} as an octahedron with a centre inscribed in a conic.
However, the edges are not marked, rather, the lines correspond to the axes, so it is a simpler skeleton from which the octahedron can be recovered.
In the same way, Figures \ref{conic} and \ref{sixteen}, are skeletons of a four-dimensional rectified simplex, and a five-dimensional demicube.

Indeed, the planar figures corresponding to Definition \ref{actions} for values of $(i,j,k)$ satisfying $i+j+k+1\ge ijk-1$, can each be described in terms of a corresponding figure from the Gosset-Elte family of uniform polytopes and tessellations.
Coxeter \cite{cox-vf,rp} introduced this family, giving a unified construction from the reflection group associated with Figure \ref{cdd}, and its members are subsequently identified by the corresponding (Coxeter) symbol $k_{i,j}$. 
Gosset and Elte separately discovered between them the exceptional cases, but the family also includes the infinite sequences of rectified simplexes corresponding to the case $k=0$, the cross-polytopes when $i=j=1$, and the demihypercubes when $i=k=1$.

The connection between Definition \ref{actions} and the combinatorial Gosset-Elte figures is based on the idea of a {\it determining set}.
Under the two constraints that (i), vertices correspond to points on a conic, and (ii), each octahedral cell has a centre, the remainder of the figure is determined uniquely from a freely chosen sub-figure.
This sub-figure, which does not contain any whole octahedral cell, will be described first, and then related to the full figure.

\begin{definition}\label{ds}
The figure associated with initial data arrays (\ref{Xdef}):
View the entries of arrays (\ref{Xdef}) as parameters corresponding to points on a conic, and add lines joining pairs of points whose entries, (i) are from distinct arrays, (ii) both belong to the first array, and, (iii) both belong to a common row or column of the second array. 
\end{definition}

The combinatorial polytope described in Definition \ref{ds} can be called a {\it compound simplex}, its associated Coxeter-graph is obtained from Figure \ref{cdd} by deleting the node labelled $r_0$.

\begin{prop}\label{description}
For generic values of the parameters, Definition \ref{ds} is a determining sub-figure of a combinatorial $k_{i,j}$-polytope or tessellation inscribed in a conic, constrained by the condition that all cross-polytope (or $k_{1,1}$) facets have a centre (Definition \ref{central}).
Birational actions generated by (\ref{rational_actions}) determine the image of this sub-figure under combinatorial symmetries of the $k_{i,j}$-figure.
\end{prop}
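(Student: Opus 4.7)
The plan is to exploit the standard Coxeter construction of the $k_{i,j}$-figure together with Lemma \ref{gg}. The vertex set of a combinatorial $k_{i,j}$-polytope or tessellation is naturally in bijection with the coset space $W/W_0$, where $W$ is the Coxeter group of Figure \ref{cdd} and $W_0 = \langle r_1,\ldots,r_k,s_1,\ldots,s_j,t_1,\ldots,t_i\rangle$ is the parabolic subgroup generated by all nodes except $r_0$. Under this bijection, the determining sub-figure of Definition \ref{ds} appears as the $r_0$-stabilised slice, parameterised by the arrays (\ref{Xdef}); its edges coincide with those edges of $k_{i,j}$ whose endpoints both lie in this slice, giving the compound simplex of Definition \ref{ds}.

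Next I would classify the octahedral (i.e.\ $k_{1,1}$) facets of $k_{i,j}$ from its Coxeter diagram: each consists of three opposite-vertex pairs whose labels within the arrays (\ref{Xdef}) are arranged exactly as in (\ref{sixvars}). The centre constraint (Definition \ref{central}) for such a facet, via Lemma \ref{gg}, becomes the multi-ratio equation (\ref{skp}). Starting from the initial sub-figure, each new vertex produced by an $r_0$ action is precisely the sixth vertex of an octahedral facet whose other five vertices already lie in a known determining slice, and (\ref{yeqn}) computes it uniquely. Iterating, one generates the full $W$-orbit of determining slices, whose union covers every vertex of $k_{i,j}$; generic choice of initial parameters ensures that the implicit non-degeneracy conditions hold throughout.

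Finally, I would verify that the generators (\ref{rational_actions}) satisfy the Coxeter relations encoded in Figure \ref{cdd} as identities among birational maps of (\ref{Xdef}). Commutation relations for generators whose nodes are disconnected are evident from their disjoint supports; the braid relations internal to the chains $\langle s_1,\ldots,s_j\rangle$, $\langle t_1,\ldots,t_i\rangle$ and $\langle r_1,\ldots,r_k\rangle$ reduce to row, column or array permutations. The essential braid relations $(r_0 s_1)^3 = \id$ and $(r_0 t_1)^3 = \id$ are precisely the content of Proposition \ref{direct}, the extended Pascal figure furnishing the three-fold closure of the $r_0$ action along a single row or column. The braid relation $(r_0 r_1)^3 = \id$ and any longer mixed relations then reduce, by induction on $k$, to identities among instances of (\ref{yeqn}) applied to interlocking octahedral cells, governed by the extended Pascal figures enumerated in Table \ref{sequence}.

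The main obstacle is precisely this last step: showing that iterated applications of (\ref{yeqn}) close consistently around the higher-order sub-configurations arising for $k \geq 1$, so that the geometric completion is independent of the order in which $r_0$-actions are performed. Additionally, one must confirm that for generic initial parameters no unintended coincidences arise, so that the constructed point-line configuration realises the $k_{i,j}$-figure itself rather than a combinatorial quotient; this amounts to the non-vanishing of a finite collection of polynomial conditions in the free parameters, and hence holds on a Zariski-open subset of initial data.
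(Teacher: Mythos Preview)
Your outline follows the same skeleton as the paper's proof: identify vertices with cosets $W/W_0$, translate the centre constraint on octahedral cells into equation (\ref{skp}) via Lemma \ref{gg}, and use transitivity of $W$ on vertices to conclude that the orbit of the determining slice covers everything. Where you diverge is in the treatment of the Coxeter relations for the birational generators (\ref{rational_actions}). The paper does \emph{not} re-prove these; it simply imports them from \cite{SKP}, stating that ``the consistency of the initial-value-problem was established previously by reducing it to verifying the relations satisfied by the associated actions''. Given that, Lemma \ref{gg} is all that remains, since it converts the analytic constraint (\ref{yeqn}) already attached to each octahedral cell into the geometric centre condition.

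Your proposal to re-derive the relations is a genuine alternative, but two points need care. First, Proposition \ref{direct} is not literally the pair of braid relations $(r_0 s_1)^3=\id$ and $(r_0 t_1)^3=\id$ taken separately; the paper presents it as the geometric counterpart of the full $A_4$ case $(i,j,k)=(2,1,0)$, i.e.\ the simultaneous system of relations among $r_0,s_1,t_1,t_2$. The individual braid relations are simpler identities in the field of rational functions and can be checked directly without any incidence theorem. Second, your ``induction on $k$'' for the relation $(r_0 r_1)^3=\id$ and its interaction with the rest is where real content hides: $r_1$ is a pure transposition on the first array, so $(r_0 r_1)^3=\id$ is again a direct computation, but the global consistency you flag as the main obstacle is exactly what \cite{SKP} handles and what the paper chooses not to reproduce. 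If you pursue your route, you should either carry out those verifications explicitly or, as the paper does, cite them.
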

\begin{proof}
The correspondence associating variables to vertices, and equations to octahedral cells of the $k_{i,j}$-polytopes and tessellations, was made previously in \cite{SKP}.
In outline, the discrete domain, like the corresponding $k_{i,j}$-figure, is a realisation of an incidence structure expressible in terms of cosets of the underlying Coxeter group.
The proposition is therefore implied directly by Lemma \ref{gg}, because it is clear that imposing that all octahedral cells have a centre, is equivalent to imposing that the cross-polytope facets do.
The set of initial values for the discrete system corresponds to the determining-set of the figure. The consistency of the initial-value-problem was established previously by reducing it to verifying the relations satisfied by the associated actions (\ref{rational_actions}).
That the figure in its entirety is determined, rather than some part of it, follows from the fact that the combinatorial symmetries are transitive on vertices and edges of the $k_{i,j}$ figures.
\end{proof}

Because the vertices of the combinatorial $k_{i,j}$-figure are confined to a conic, the centres of all cross-polytope facets belonging to the same $k_{1,2}$ or $k_{2,1}$-facet, are collinear.
In the case of the rectified four-simplex ($0_{1,2}$-polytope) this was established as part of Proposition \ref{direct}, and the more general assertion follows by repeated application of this case.
\begin{remark}\label{Pi}
The cross-polytope centres (Proposition \ref{description}) are therefore points of a planar point-line configuration ($\Pi$ in Table \ref{main}), which is a projective realisation of the incidence structure given by associating points with $k_{1,1}$-facets, and lines with the $k_{2,1}$ and $k_{1,2}$-facets, of the $k_{i,j}$-figure.
\end{remark}

\section{Ambient solution}\label{solutions}
An application of this geometric view of Coble's group (Proposition \ref{description}, Remark \ref{Pi}), is that the special situation when {\it all} cross-polytope centres align (the configuration $\Pi$ in Table \ref{main} collapses to a line) can be identified, that suggests the birational group should linearise in terms of the associated geometric group on the conic.

To formulate this, recall first the participating group.
Denote the conic and the single Pascal line by $C$ and $\pi$ respectively.
For any $a,b,e\in C\setminus \pi$, there exists a unique point $c\in C\setminus \pi$ such that the lines determined by the pairs $\{a,b\}$ and $\{c,e\}$, intersect $\pi$ at the same point.
The resulting mapping $(a,b)\mapsto c$ turns $C\setminus \pi$ into a group with identity $e$.
In terms of corresponding parameters $(x,y,z)$ for $(a,b,c)$, use the notation $z=x*y$ for the product, and $x^{-1}$ for the inverse of $x$.
This product is equivalent to either addition or multiplication in the field, depending on how $\pi$ meets $C$.
\begin{prop}\label{solution}
The cross-polytope centres described in Proposition \ref{description} lie on a common line $\pi$ if, and only if, elements of the array on the right in (\ref{Xdef}) are such that
\begin{equation}\label{fortress}
y_{mn}=y_{00}^{-1}*y_{m0}*y_{0n}, \quad m\in\{1,\ldots, i\},n\in\{1,\ldots, j\}. 
\end{equation}
This reduces the birational actions (\ref{rational_actions}) to linear ones for the remaining variables:
\begin{equation}\label{linear-actions}
\begin{array}{rll}
t_m: & y_{(m-1)0} \leftrightarrow y_{m0}, & m \in \{2,\ldots,i\},\\
t_1: & y_{00}\leftrightarrow y_{10}, \ y_{0n} \rightarrow y_{00}^{-1}*y_{10}*y_{0n}, & n \in \{1,\ldots,j\},\\
s_n: & y_{0(n-1)} \leftrightarrow y_{0n}, & n\in\{2,\ldots,j\},\\
s_1: & y_{00}\leftrightarrow y_{01}, \ y_{m0} \rightarrow y_{00}^{-1}*y_{01}*y_{m0}, & m \in \{1,\ldots,i\},\\
r_n: & y_{n-1} \leftrightarrow y_{n}, & n\in\{1,\ldots,k\},\\
r_0: & y_0\leftrightarrow y_{00}.
\end{array}
\end{equation}
\end{prop}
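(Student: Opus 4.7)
The plan is to translate the geometric collinearity into an algebraic identity via the group operation $*$ on $C\setminus\pi$, deduce that each initial octahedral cell has centre on $\pi$ if and only if (\ref{fortress}) holds, then propagate using the observation that (\ref{fortress}) is invariant under the generators (\ref{rational_actions}), which simultaneously yields the linear form (\ref{linear-actions}).

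First I would establish the basic criterion: two chords $\{a,b\}$ and $\{c,d\}$ of $C$, with all four points in $C\setminus\pi$, meet at a common point of $\pi$ if and only if $a*b=c*d$. This is immediate from the definition of $*$, since $a*b$ is the unique $z$ for which the chord $\{z,e\}$ meets $\pi$ where $\{a,b\}$ does, and similarly for $c*d$. Applied to the octahedron of (\ref{yeqn}) with opposite pairs $\{y_0,\bar{y}_{mn}\}$, $\{y_{0n},y_{m0}\}$, $\{y_{00},y_{mn}\}$, its centre lies on $\pi$ precisely when $y_0*\bar{y}_{mn}=y_{0n}*y_{m0}=y_{00}*y_{mn}$; the latter equality rearranges to (\ref{fortress}). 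Since every initial octahedron is a cross-polytope facet of the $k_{i,j}$-figure, this gives the ``only if'' direction.

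For the converse and the reduction to (\ref{linear-actions}), I would verify that (\ref{fortress}) is preserved by each generator in (\ref{rational_actions}). The permutations $t_m, s_n, r_n$ require only bookkeeping. The case $r_0$ is the substantive one: combining the two equalities above under (\ref{fortress}) yields $\bar{y}_{mn}=y_0^{-1}*y_{00}*y_{mn}$, and direct substitution then shows that the transformed array (with $y_0\leftrightarrow y_{00}$ and $y_{mn}\to\bar{y}_{mn}$) again satisfies (\ref{fortress}). Since the Coble group acts equivariantly with the combinatorial symmetries of the $k_{i,j}$-figure (Proposition \ref{description}), this invariance propagates centre-on-$\pi$ from the initial octahedra to every cross-polytope facet, establishing the ``if'' direction. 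Reading off the induced actions on the reduced variable set $\{y_0,\ldots,y_k,y_{00},y_{m0},y_{0n}\}$ then gives (\ref{linear-actions}): the generators $t_m$ ($m\ge 2$), $s_n$ ($n\ge 2$), $r_n$ ($n\ge 1$), and the $r_0$ swap $y_0\leftrightarrow y_{00}$ are bare transpositions (the induced change $y_{mn}\to\bar{y}_{mn}$ is absorbed by (\ref{fortress})), while $s_1:y_{m0}\leftrightarrow y_{m1}$ replaces $y_{m0}$ with the old $y_{m1}=y_{00}^{-1}*y_{m0}*y_{01}$ by (\ref{fortress}), matching the stated formula, and $t_1$ is handled identically.

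The main obstacle is the propagation step: verifying that (\ref{fortress}), imposed only on the initial data, is genuinely invariant under the full Coble action. This is elementary for the permutations, but for $r_0$ it depends on first extracting the explicit formula for $\bar{y}_{mn}$ from the concurrence identity, and the whole deduction relies on the commutativity of $*$, which comes from the underlying field structure.
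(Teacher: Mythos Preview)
Your proposal is correct and follows essentially the same route as the paper: translate ``centre on $\pi$'' into the product identity $a*b=c*d$, show this gives (\ref{fortress}) on the initial data, and then propagate by checking invariance of (\ref{fortress}) under the generators, with $r_0$ handled via Lemma~\ref{gg}. The one organisational difference is that the paper introduces the symmetric form $y_{mn}*y_{\alpha\beta}=y_{m\beta}*y_{\alpha n}$ (equation (\ref{reinforced})), equivalent to (\ref{fortress}), which makes invariance under $t_1$ and $s_1$ immediate rather than the short calculation your ``bookkeeping'' actually entails.
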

\begin{proof}
The intersection of lines determined by the two pairs $\{y_{mn},y_{\alpha\beta}\}$ and $\{y_{m\beta},y_{\alpha n}\}$ taken from the array on the right in (\ref{Xdef}), where $m\neq \alpha$ and $n\neq \beta$, is the centre of an octahedral cell.
The condition for these centres to be on $\pi$ can be expressed in terms of the geometric group on $C\setminus \pi$, as
\begin{equation}\label{reinforced}
y_{mn}*y_{\alpha\beta}=y_{m\beta}*y_{\alpha n}, \quad m,\alpha\in \{0,\ldots, i\},\ n,\beta\in \{0,\ldots, j\}.
\end{equation}
The conditions (\ref{fortress}) are clearly a subset of the conditions (\ref{reinforced}).
That (\ref{fortress}) implies (\ref{reinforced}) is established by substitution, relying on associativity of the group and therefore Pascal's theorem in geometric terms.

To verify that all octahedral centres are on $\pi$, it is sufficient to show that the actions (\ref{rational_actions}) preserve the condition (\ref{fortress}).
For the actions $t_1,\ldots,t_i,s_1,\ldots,s_j$ this follows from the established equivalence between (\ref{fortress}) and (\ref{reinforced}).
For the action $r_0$ it follows from the concurrence of lines determined by the three pairs of points (\ref{sixvars}), which was the content of Lemma \ref{gg}.
For the remaining actions $r_1,\ldots,r_k$ it is trivial.
\end{proof}
In the case $k_{1,2}$, corresponding to the sequence in Table \ref{sequence}, there is only a single Pascal line, so there is no loss of generality if (\ref{fortress}) is assumed.
The subgroup associated with the $1_{5,2}$-tessellation is known to be linearised by substitution of elliptic functions, and Proposition \ref{solution} shows that rational functions are sufficient for the $5_{2,1}$-tessellation.
Nevertheless, it determines an $\tilde{E}_8$ action on $\mathbb{P}^1\times \mathbb{P}^1$ by equivariant extension to the case $5_{3,1}$.
It would therefore be interesting to know where this fits in relation to the QRT maps \cite{qrt1} and classification of discrete Painlev\'e equations \cite{sc}.

\remark{
The linear representation (\ref{linear-actions}) of the Coxeter group encoded in Figure \ref{cdd} can of course be realised in any abelian group, and in particular when the conic is replaced with a cubic curve.
However, in the general case, when the representation of the Coxeter group is birational (Section \ref{Coble}),
the algebraic view is along the lines described by M\"obius, relying on closure in the non-commutative M\"obius group, which is incompatible with the generalisation to a cubic curve.
In other words, if the combinatorial Gosset-Elte figures, with the same constraint that octahedral cells have a centre, were to be inscribed in a cubic curve, it would correspond to the linear group (\ref{linear-actions}) when the octahedral centres are also on the curve, but a group that is not in general birational when the positions of the centres are unconstrained.
}

\section{Associated circle patterns}\label{Clifford}
If the projective plane over $\mathbb{C}$ is viewed as a four-dimensional real space, then a conic in the plane is seen as a quadric surface in the space.
Three points on the conic correspond to points on the surface that are either on a line within the surface, or determine a plane that cuts it.
In this way, the surface constitutes an inversive plane; each curve determined by a set of three points, corresponds to a circle.
\begin{remark}\label{four-points}
The ambient notion of general-position for sets of points in an inversive plane, has the requirement that no four points be on the same circle.
The corresponding requirement is absent in the projective setting.
\end{remark}

In this model of the inversive plane, equation (\ref{skp}) is an analytic form of a geometric constraint established in \cite{ks1,ks2}, that the four circles determined by sets of points
\begin{equation}\label{faces1} 
\{x_{12},x_{23},x_{13}\}, 
\{x_{23},x_{34},x_{24}\}, 
\{x_{13},x_{34},x_{14}\}, 
\{x_{12},x_{24},x_{14}\},
\end{equation}
are concurrent at a point.
Or, equivalently, due to Clifford's four-circle theorem, that the circles determined by the sets
\begin{equation} \label{faces2}
\{x_{12},x_{13},x_{14}\}, 
\{x_{12},x_{23},x_{24}\}, 
\{x_{13},x_{23},x_{34}\}, 
\{x_{14},x_{24},x_{34}\},
\end{equation}
are.
The resulting figure of eight circles and eight points, is the $C_4$ pattern appearing first in Clifford's chain of theorems.
By symmetry, the condition on the participating points can be reduced to the equivalent one, that any three of the circles meet at a point.
The sets of points (\ref{faces1}) and (\ref{faces2}) correspond to the eight faces of the octahedron.
In summary.
\begin{coro}[of Proposition \ref{description}] \label{circles}
View the entries of arrays (\ref{Xdef}) as points in the inversive plane.
In general position, these points determine a combinatorial $k_{i,j}$-polytope or tessellation, constrained by the condition that the vertices of each octahedral cell correspond to points of a $C_4$ circle pattern (see above).
When the inversive plane is identified with $\mathbb{C}\cup\{\infty\}$, the actions (\ref{rational_actions}) give the image of the determining-set under combinatorial symmetries of the $k_{i,j}$-figure.
\end{coro}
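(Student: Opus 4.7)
The plan is to reduce the statement to Proposition \ref{description} via the identification of the conic $C$ in the complex projective plane with the inversive plane, already set up in the opening paragraph of this section. I would first fix this identification: each entry of the arrays (\ref{Xdef}), originally a parameter value on $C$, is simultaneously viewed as a point of the inversive plane, and any three such points determine a circle given by the curve cut out on the quadric surface associated with $C$ by the corresponding secant plane in real four-space.

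The main step is the translation of the governing constraint on each octahedral cell. In the projective setup of Proposition \ref{description}, this constraint was that the three axes of the octahedron meet at a point, expressed analytically by the six-point multi-ratio equation of Lemma \ref{gg}. Under the inversive-plane identification, the same analytic relation is, as recalled from \cite{ks1,ks2} immediately above the Corollary, equivalent to the condition that the four circles determined by the triples (\ref{faces1}) are concurrent; Clifford's four-circle theorem then upgrades this to the full $C_4$ pattern incorporating the circles through the triples (\ref{faces2}). Thus, cell by cell, the projective centre condition on the octahedra is the same as the $C_4$ condition on their face-circles, and Proposition \ref{description} applies verbatim, together with its conclusion about the birational actions (\ref{rational_actions}) realising the combinatorial symmetries on the determining set.

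The only subtlety, which I would address briefly, is the mismatch of "general position" between the two settings highlighted in Remark \ref{four-points}: inversive-plane genericity additionally requires that no four of the points lie on a common circle. This is an open condition and is therefore generic, and under it the cell-by-cell equivalence just invoked does not degenerate (each of the four face-circles is then genuinely distinct and meets the others transversally). This is the only step that is not purely formal; the rest of the argument is a direct transcription of Proposition \ref{description} through the model.
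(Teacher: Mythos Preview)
Your proposal is correct and follows essentially the same approach as the paper: the corollary is presented there without a separate proof, the surrounding text simply noting that equation (\ref{skp}) is, via \cite{ks1,ks2} and Clifford's four-circle theorem, the analytic form of the $C_4$ condition, so that Proposition \ref{description} transfers directly through the conic/inversive-plane identification. Your explicit treatment of the general-position mismatch from Remark \ref{four-points} is a welcome addition that the paper leaves implicit.
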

The statement of existence and movability of the circle patterns in the case $k=0$ was given originally in relation to the integrability of equation (\ref{skp}) in \cite{ks1}.
Longuet-Higgins \cite{LH} actually related Clifford's chain to the $1_{1,j}$-polytopes (demihypercubes), and added new examples to previously known configurations of points and hyperspheres in connection with the remaining (finite) list of $1_{i,j}$-polytopes.
The patterns here are planar, so they have a different nature, but they exist uniformly in relation to all of the $k_{i,j}$-polytopes and tessellations.

The mismatch between patterns corresponding to our and the Longuet-Higgins Coxeter symbol, is seen already in the case of the $C_4$ pattern described above.
The two additional points are on the same footing as the original ones, and the eight taken together correspond to vertices of the four-dimensional cross-polytope, whereas the original constraint corresponds to the octahedron.
This feature is related to Remark \ref{four-points}, however, a generalisation to higher dimension without this feature as been obtained \cite{CSC}, and it would be interesting to know if those patterns have a projective formulation.

Circle patterns for Riccati solutions of discrete Painlev\'e equations have been related to discrete analogues of holomorphic functions \cite{Aga,AgaBob}. 
The circle patterns here are instead associated with the general solution, but the hypergeometric solutions of the elliptic Painlev\'e equation \cite{10E9} are also framed geometrically in terms of a conic.

\section{Desargues maps}\label{Desargues}
In terms of the Coxeter symbol, the Desargues maps of Doliwa \cite{DDM,Daff} in the fundamental region, are a combinatorial $0_{i,j}$-polytope satisfying the condition that the vertices of each $0_{i,0}$-facet are collinear.
It means some edges coalesce resulting in a point-line configuration; in summary we emphasise the following.
\begin{remark}\label{Doliwa}
In the fundamental region the Desargues map is a point-line realisation of the incidence structure defined by vertices and $0_{i,0}$-facets of the $0_{i,j}$-polytope.
\end{remark}
This is a 
$
(p_{j+1},q_{i+2}),\ p=(i+j+2)!/[(i+1)!(j+1)!], 
$
configuration, the first few cases are given in Table \ref{configs}.
Although they are both simplexes, the $0_{0,j}$-facets are distinguished from the $0_{i,0}$ ones, which breaks the antipodal symmetry of the $0_{i,i}$-polytopes, or equivalently the Coxeter graph automorphism.
There is a direct relationship between the Desargues maps whose image is in $\mathbb{A}^M$, $M>1$, over a field, and the form of Coble's group in Definition \ref{actions}, which leads to the following.
\begin{table}[t!]
\begin{center}
\begin{tabular}{lllll}
$(6_2,4_3)$ & $(10_3,10_3)$ & $(15_4,20_3)$ & $(21_5,35_3)$ & $(28_6,56_3)$ \\
$(10_2,5_4)$ & $(20_3,15_4)$ & $(35_4,35_4)$ & $(56_5,70_4)$ & $(84_6,126_4)$ \\
$(15_2,6_5)$ & $(35_3,21_5)$ & $(70_4,56_5)$ & $(126_5,126_5)$ & $(210_6,252_5)$ \\
$(21_2,7_6)$ & $(56_3,28_6)$ & $(126_4,84_6)$ & $(252_5,210_6)$ & $(462_6,462_6)$ \\
$(28_2,8_7)$ & $(84_3,36_7)$ & $(210_4,120_7)$ & $(462_5,330_7)$ & $(924_6,792_7)$
\end{tabular}
\end{center}
\caption{Enumerative description of the first few of Doliwa's configurations associated with Coxeter symbol $0_{i,j}$. The balanced $(10_3)$, corresponding to Coxeter symbol $0_{1,2}$, is Desargues' configuration.}
\label{configs}
\end{table}

\begin{prop}\label{dd}
Freely choose a point in $\mathbb{A}^M$, $j+1$ lines through it, and $i+1$ additional points on each line.
In general position, this is a determining-set for Doliwa's configuration associated with Coxeter symbol $0_{i,j}$ (Remark \ref{Doliwa}).

Denote the first freely chosen point by $y_0$ and, for each $n\in\{0,\ldots,j\}$, denote the $i+1$ additional points on the $n^{th}$ line passing through $y_0$, by $y_{0n},\ldots,y_{in}$.
Then generators (\ref{rational_actions}) in the case $k=0$, acting diagonally on $\mathbb{A}^M$, give the image of this determining-set under combinatorial symmetries of the configuration.
\end{prop}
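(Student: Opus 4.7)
The plan is to exploit the commutative affine structure of $\mathbb{A}^M$, which diagonalises the Desargues configuration into $M$ independent scalar systems, each being the case $k=0$ of Coble's group acting on scalars (Definition \ref{actions}). First I would translate $y_0$ to the origin and, for each of the $M$ coordinate directions, extract the scalar coordinate of every vertex, assembling them into an array isomorphic to (\ref{Xdef}) with $k=0$: a single scalar for $y_0$ and an $(i+1) \times (j+1)$ grid indexed by $(m,n)$ for the $y_{mn}$.

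The main step is to show that collinearity of each $0_{i,0}$-facet in $\mathbb{A}^M$ is captured, in each coordinate, by the multi-ratio equation (\ref{skp}) of Lemma \ref{gg} on every octahedral sub-structure of the $0_{i,j}$-polytope. Once this is established, the $r_0$ action of (\ref{rational_actions}) acts coordinate-wise in $\mathbb{A}^M$ via (\ref{yeqn}), while the permutation generators $t_m, s_n$ act as index permutations on the array. The consistency of the generators established in Section \ref{Coble}, applied within each scalar copy, shows that the initial data determines all remaining vertices; combining across the $M$ coordinates, the full Desargues configuration in $\mathbb{A}^M$ is determined. Both claims of the proposition follow, with faithfulness inherited from each scalar copy as in Remark \ref{strong}.

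The main obstacle is the equivalence in the main step: collinearity of $i+2$ points in $\mathbb{A}^M$ a priori couples the coordinates via a shared affine parameter, whereas the multi-ratio equation in each coordinate is an independent scalar relation. The resolution relies on the observation that Desargues collinearity, restricted to any octahedral ($0_{1,1}$) sub-structure of six vertices inside the $0_{i,j}$-configuration, is equivalent to concurrency of the three axes of that octahedron, which in turn decomposes coordinate-wise into (\ref{skp}) because in $\mathbb{A}^M$ over a field the relevant ratios factor through scalar projections. I would verify this first in the base case $(i,j)=(1,1)$, where the $0_{1,1}$-polytope is the cross-polytope and the resulting Doliwa configuration is the classical $(6_2,4_3)$, and then extend by induction on $i+j$, using the group action to propagate the equivalence from any sub-octahedron to the full configuration, supported by the Pascal-type reasoning already employed in Proposition \ref{direct}.
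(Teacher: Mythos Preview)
Your approach is essentially the paper's: diagonalise via affine coordinates, reduce the non-trivial generator $r_0$ to the complete-quadrilateral base case $(i,j)=(1,1)$ where the multi-ratio (\ref{skp}) holds coordinate-wise, and then let the group structure carry the rest. The paper's proof does exactly this, noting that the multi-ratio determines one point from the others, so the diagonally-applied $r_0$ automatically preserves collinearity; the identification of the determining sub-figure with a vertex and $j+1$ incident $0_{i,0}$-facets is then read off directly from how the generators act on the arrays, and completeness follows from vertex-transitivity already established in Proposition~\ref{description}.

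The one place your plan drifts is the proposed induction on $i+j$ ``supported by the Pascal-type reasoning already employed in Proposition~\ref{direct}''. This is both unnecessary and misplaced: Pascal's theorem is a statement about conics and plays no role in the affine Desargues setting (indeed the paper remarks that consistency here rests on Desargues' theorem, not Pascal's). No induction is needed, because once the single check for $r_0$ in the case $i=j=1$, $M=2$ is made, the general case is already covered by the group relations from Section~\ref{Coble} together with Proposition~\ref{description}. Dropping the induction and the Pascal reference, your argument is the paper's.
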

\begin{proof}
The connection between this geometric setting and equation (\ref{skp}) can be formulated as follows.
Suppose lines $l_1,\ldots,l_4 \subset \mathbb{A}^M$ intersect pairwise,
\begin{equation}
\begin{split}
&l_1 \cap l_2 = x_{12}, \quad
l_1 \cap l_3 = x_{13}, \quad
l_2 \cap l_3 = x_{23}, \\
&l_1 \cap l_4 = x_{14}, \quad
l_2 \cap l_4 = x_{24}, \quad
l_3 \cap l_4 = x_{34}, 
\end{split}
\end{equation}
for some set of points $x_{12},\ldots,x_{34}\in \mathbb{A}^M$, forming the configuration of a complete quadrilateral $(6_2,4_3)$.
Then the points satisfy the analytic condition
\begin{equation}
\frac{(x^{{(\alpha)}}_{12}-x^{{(\alpha)}}_{24})(x^{{(\alpha)}}_{13}-x^{{(\alpha)}}_{23})(x^{{(\alpha)}}_{14}-x^{{(\alpha)}}_{34})}{(x^{{(\alpha)}}_{12}-x^{{(\alpha)}}_{23})(x^{{(\alpha)}}_{14}-x^{{(\alpha)}}_{24})(x^{{(\alpha)}}_{13}-x^{{(\alpha)}}_{34})}=1, \quad \alpha \in \{1,\ldots,M\},\label{component}
\end{equation}
in which the affine coordinate notation $x_{mn} = (x_{mn}^{(1)},\ldots,x_{mn}^{(M)})$ has been used.
Because the condition (\ref{component}) determines one point from the others, it follows that the diagonal actions described in the proposition preserve the collinearity assumed for the points of the determining-set; it is trivial for all actions except $r_0$, and for this one it is sufficient to consider the case $i=j=1$, $M=2$.
In a projective setting, the multi-ratio condition emerges by projection of the quadrilateral from a point onto a line,
equivalent to the above for the affine case, and this has been described in earlier works \cite{Bog,DDM}.
The affine setting used in this proposition is practical, and not really a restriction: it differs by the implied meaning of {\it general-position} for the determining sub-figure. 

Observe now (separately from the above) that the $0_{i,0}$ sub-graph of generators (\ref{rational_actions}), i.e., $r_0,t_1,\ldots,t_i$, freely permutes entries $y_0,y_{00},y_{10},\ldots,y_{i0}$ of arrays (\ref{Xdef}), so in accordance with the correspondence described in Proposition \ref{description}, these entries correspond to vertices of a $0_{i,0}$-facet.
Also notice, that all rows of the array on the right in (\ref{Xdef}) are on the same footing, because the rows are themselves freely permuted by the actions $s_1,\ldots,s_j$.
Therefore the entries of rows of the array on the right in (\ref{Xdef}) correspond to vertices of $j+1$ $0_{i,0}$-facets whose common vertex corresponds to $y_0$.

The second paragraph of the Proposition follows by combining these two observations.
The first shows that the diagonal actions described determine a point-line figure in $\mathbb{A}^M$, and the second that it coincides with Doliwa's $0_{i,j}$ configuration.

It is clear the general configuration is determined by the group actions, because, first, all that is imposed is the linearity and multi-ratio constraints on each octahedral cell, which are both necessary.
Second, that the whole configuration is obtained, and not some part of it, follows from the fact that the group of combinatorial symmetries of the $0_{i,j}$ polytope acts transitively on vertices.
This establishes the first paragraph of the Proposition.
\end{proof}
\begin{table}[t!]
\begin{center}
\begin{tabular}{lllll}
\!\! -&\!\!$(5_1,1_5)$&\!\!$(15_2,6_5)$&\!\!$(35_3,21_5)$&\!\!$(70_4,56_5)$\\
\!\!$(5_1,1_5)$&\!\!$(30_2,12_5)$&\!\!$(105_3,63_5)$&\!\!$(280_4,224_5)$&\!\!$(630_5,630_5)$\\
\!\!$(15_2,6_5)$&\!\!$(105_3,63_5)$&\!\!$(420_4,336_5)$&\!\!$(1260_5,1260_5)$&\!\!$(3150_6,3780_5)$\\
\!\!$(35_3,21_5)$&\!\!$(280_4,224_5)$&\!\!$(1260_5,1260_5)$&\!\!$(4200_6,5040_5)$&\!\!$(11550_7,16170_5)$\\
\!\!$(70_4,56_5)$&\!\!$(630_5,630_5)$&\!\!$(3150_6,3780_5)$&\!\!$(11550_7,16170_5)$&\!\!$(34650_8,55440_5)$
\end{tabular}
\end{center}
\caption{Enumerative description of the first few configurations associated with Coxeter symbol $0_{i,j}$ described in Remark \ref{Pi}. Entry corresponding to Coxeter symbol $0_{2,2}$ is Schl\"afli's double-six.}
\label{configs2}
\end{table}
This proposition shows that the Desargues maps whose image is in $\mathbb{A}^M$ over a field, decompose into $M$ simpler independent systems, each equivalent to case $k=0$ of Coble's group.
The corresponding result in a projective space over a skew-field is likely to hold, because the consistency for Doliwa's configurations is encoded in Desargues' theorem, and not Pascal's \cite{DDM,ks2}. The suitable notion of general-position for the inital data, will be weaker in the projective case. 
The circle patterns corresponding to the case $k=0$ have also been generalised in this direction \cite{CSC}.
Given the view established by Lemma \ref{gg}, it would therefore be interesting to know if this form of Coble's group can extend to the skew-field setting in the case $k>0$.
In this regard, a basic question to shed some light, is to know the geometric origin of the consistency property underlying the movability of the point-line configurations $\Pi$ of Remark \ref{Pi}.
The first few cases associated to the Coxeter symbol $0_{i,j}$ are displayed Table \ref{configs2} for comparison with Table \ref{configs}.

The connection between the Desargues maps and the polytope inscribed in a conic, is not a geometric one.
It is possible to see the Desargues maps as a special case of the circle patterns \cite{ks2} in which all circles become lines, but they then appear in a restricted two-dimensional setting.
However, Proposition \ref{dd} is equivalent to a symmetry of the actions (\ref{rational_actions}) present when $k=0$.
A similar kind of symmetry is also present in the case $k=1$; it is inferred from the integration that recovers Coble's group in its original form \cite{SKP}.
It is therefore an interesting question to understand the geometric origin of these symmetries in the context of Lemma \ref{gg}.

\vspace{10pt}
\noindent {\bf\large Acknowledgements}\\
\noindent I am very grateful to Pavlos Kassotakis for our discussions on this topic.

\appendix

\section{Proof of Lemma \ref{gg}}\label{roots}
There is an equivalence between identifying a polynomial of degree $n$ with its set of factors, and the identification of a hyperplane in $\mathbb{P}^n$ by the set of points where it intersects the {\it rational normal curve}, 
\[ (a^n:a^{n-1}b:\ldots:ab^{n-1}:b^n), \quad (a:b)\in\mathbb{P}^1, \]
the order of the contact with the curve corresponds to multiplicity of the factor.
This can be used in the case $n=2$, when the curve is a conic, to verify Lemma \ref{gg}.

In the polynomial context, the main observation is that:
\begin{lemma}\label{gi}
Equation (\ref{skp}) is the condition for linear dependence of the three quadratic polynomials, 
\begin{equation}\label{tp}
(x-x_{12})(x-x_{34}),\ 
(x-x_{13})(x-x_{24}),\ 
(x-x_{14})(x-x_{23}).
\end{equation}
\end{lemma}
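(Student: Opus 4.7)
My plan is to prove Lemma \ref{gi} by a direct computation of a $3\times 3$ determinant. The space of quadratic polynomials in $x$ is three-dimensional, spanned by $\{1,x,x^2\}$, so three such polynomials are linearly dependent if and only if the determinant of their coefficient vectors vanishes. Writing each polynomial of (\ref{tp}) as $x^2-s_ix+p_i$ with $s_i$, $p_i$ the elementary symmetric functions of the corresponding pair of roots (so $s_1=x_{12}+x_{34}$, $p_1=x_{12}x_{34}$, and similarly for the other two), the condition is
\[
D := \det \begin{pmatrix} 1 & -s_1 & p_1 \\ 1 & -s_2 & p_2 \\ 1 & -s_3 & p_3 \end{pmatrix} = 0.
\]

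Subtracting the first row from the others, this determinant simplifies to the cyclic expression $D = s_1(p_3-p_2) + s_2(p_1-p_3) + s_3(p_2-p_1)$. I would then substitute the expressions for $s_i,p_i$ in terms of the $x_{mn}$, expand, and identify the result, up to sign, with the difference
\[
(x_{12}-x_{23})(x_{14}-x_{24})(x_{13}-x_{34}) - (x_{12}-x_{24})(x_{13}-x_{23})(x_{14}-x_{34}),
\]
whose vanishing is exactly equation (\ref{skp}). The verification can be streamlined by noting that both expressions are homogeneous of degree three in the six variables $x_{mn}$, both are invariant under each of the within-pair swaps $x_{12}\leftrightarrow x_{34}$, $x_{13}\leftrightarrow x_{24}$, $x_{14}\leftrightarrow x_{23}$, and both change by a common sign under cyclic permutation of the three pairs. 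Expanding each side in monomials $x_{ab}\,x_{cd}\,x_{ef}$ and comparing coefficients term by term completes the identification.

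The main obstacle here is nothing deep: only the bookkeeping of the algebraic expansion, checking that no term appears with the wrong sign. There is no conceptual hurdle, since the assertion is simply a polynomial identity in six variables, amenable to an elementary (or symbolic) computation. The structural content of Lemma \ref{gi} only emerges when it is combined with the rational-normal-curve identification sketched in the paragraph preceding it: quadratic polynomials correspond to hyperplanes in $\mathbb{P}^2$, the three quadratics (\ref{tp}) correspond to the three chords of the conic, and linear dependence of the polynomials corresponds to concurrency of these chords, thereby recovering Lemma \ref{gg}.
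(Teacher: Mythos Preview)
Your proposal is correct and is essentially the paper's own argument: the paper simply rewrites (\ref{skp}) as the vanishing of the $3\times 3$ determinant
\[
\left|\begin{array}{ccc}
x_{12}x_{34} & x_{12}+x_{34} & 1\\
x_{13}x_{24} & x_{13}+x_{24} & 1\\
x_{14}x_{23} & x_{14}+x_{23} & 1
\end{array}\right|=0,
\]
which is exactly your coefficient determinant $D$ (up to column permutation and an overall sign), and observes that this is the linear-dependence condition for the three quadratics. Your extra paragraph verifying that $D$ equals the difference of the two triple products in (\ref{skp}) just fills in the step the paper leaves implicit.
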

\begin{proof}
The equation (\ref{skp}) can be written differently as
\begin{equation}\label{skp2}
\left|
\begin{array}{ccc}
x_{12}x_{34} & x_{12}+x_{34} & 1\\
x_{13}x_{24} & x_{13}+x_{24} & 1\\
x_{14}x_{23} & x_{14}+x_{23} & 1
\end{array}
\right| = 0,
\end{equation}
which is the condition on polynomial coefficients for the linear dependence.
\end{proof}
Take the representation of $\mathbb{P}^2$ where {\it lines} and {\it points}, respectively, are represented by the {\it one} and {\it two} dimensional subspaces in the vector space of polynomials whose degree is less than or equal to two.
This is dual to taking polynomial coefficients as projective coordinates.
In this representation, the subspaces of polynomials defined by a shared root,
\begin{equation}\label{con}
\langle
ax-b,x(ax-b)
\rangle, \quad (a:b)\in\mathbb{P}^1,
\end{equation}
represent the points of a conic, and the subspaces of polynomials with a double root,
\begin{equation}\label{tan}
\langle
(ax-b)^2
\rangle,
\end{equation}
represent its corresponding tangent lines.
Distinct elements $(a:b),(c:d)\in\mathbb{P}^1$ determine a secant,
\begin{equation}\label{sec}
\langle
ax-b,x(ax-b)
\rangle
\cap
\langle
cx-d,x(cx-d)
\rangle
=
\langle
(ax-b)(cx-d)
\rangle.
\end{equation}
The linear dependence of polynomials (\ref{tp}) is concurrence of corresponding secant or tangent lines, and therefore Lemmas \ref{gg} and \ref{gi} are equivalent.

One reason to verify Lemma \ref{gg} in this way, is due to the related form of Coble's group established in \cite{ib,ay}.
That form involves the case $n=3$ above, of concurrent planes passing through a twisted cubic curve, and provides a view of the group complementary to the one here, that will be explained in a forthcoming paper.

\bibliographystyle{unsrt}
\bibliography{references}
\end{document}